\newcommand{\submit}{}
\newcommand{\todo}[1]{}
\newcommand{\changed}[1]{#1}
\newcommand{\toread}[1]{}
\newcommand{\done}[1]{}
\newcommand{\outline}[1]{}
\newcommand{\point}[1]{}
\newcommand{\todofigure}[1]{}
\newcommand{\githuburl}{\url{https://github.com/bailuding/rails}}
\long\def\tocut#1{}
\newcommand{\todo}[1]{}
\newcommand{\changed}[1]{{\color{blue}#1}}
\newcommand{\done}[1]{}
\newcommand{\outline}[1]{}
\newcommand{\point}[1]{}
\newcommand{\toread}[1]{{\color{blue}[ToRead!] #1}}
\long\def\tocut#1{}
\newcommand{\todo}[1]{\textcolor{red}{\bf [TODO!: #1]}}
\newcommand{\changed}[1]{{\color{blue}#1}}
\newcommand{\tocut}[1]{\textcolor{red}{\it\st{#1}}}
\newcommand{\done}[1]{\textcolor{blue}{\bf [DONE!] #1}}
\newcommand{\outline}[1]{\textcolor{teal}{\bf [OUTLINE!: #1]}}
\newcommand{\point}[1]{\textcolor{teal}{\bf [POINT!: #1]}}
\newcommand{\toread}[1]{{\color{blue}[ToRead!] #1}}
\newcommand{\todofigure}[1]{{\color{red}[Figure]: #1}}
\newcommand{\githuburl}{the following GitHub repository: \url{https://anonymous.4open.science/r/rails-4E62}}
\newcommand{\mol}{\textit{MoL}\xspace}
\newtheorem{theorem}{Theorem}
\newtheorem{lemma}{Lemma}
\newtheorem{definition}{Definition}
\newcommand{\mlonem}{\textit{ML-1M}\xspace}
\newcommand{\mltwentym}{\textit{ML-20M}\xspace}
\newcommand{\amzn}{\textit{Books}\xspace}
\newcommand{\nq}{\textit{NQ320K}\xspace}
\newcommand{\topkbf}{\textit{BruteForce}}
\newcommand{\topknaive}[1]{\textit{TopKPerEmbd}#1}
\newcommand{\topkavg}[1]{\textit{TopKAvg}#1}
\newcommand{\topkcomb}[2]{\textit{CombTopK}#1\_#2}
\newcommand{\maxspeedup}[0]{\changed{\nlprevision{$66\times$}}\xspace} 
\newcommand{\nqmaxspeedup}[0]{$66\times$\xspace} 
\DeclareMathOperator*{\argmax}{arg\,max}
\newcommand{\eat}[1]{} 
\newcommand{\cut}[1]{} 
\newcommand{\nlprevision}{}
\begin{document}

\title{Retrieval with Learned Similarities} 

\author{Bailu Ding}
\authornote{Equal contribution.}
\email{badin@microsoft.com}
\orcid{0000-0003-4138-6379}
\affiliation{%
  \institution{Microsoft Research}
  \city{Redmond}
  \state{Washington}
  \country{USA}
}

\author{Jiaqi Zhai}
\authornotemark[1]
\email{jiaqi@jiaqizhai.com}
\orcid{0009-0004-7279-3318}
\affiliation{%
  \institution{Meta}
  \city{Bellevue}
  \state{Washington}
  \country{USA}
}

\renewcommand{\shortauthors}{Bailu Ding and Jiaqi Zhai}


\begin{abstract}

\cut{
Retrieval plays a fundamental role in recommendation systems, search, and natural language processing (NLP) by efficiently finding relevant items from a large corpus given a query. Dot products have been widely used as the similarity function in such tasks, enabled by Maximum Inner Product Search (MIPS) algorithms for efficient retrieval. However, state-of-the-art retrieval algorithms have migrated to learned similarities. These advanced approaches encompass multiple query embeddings, complex neural networks, direct item ID decoding via beam search, and hybrid solutions. Unfortunately, we lack efficient solutions for retrieval in these state-of-the-art setups. Our work addresses this gap by investigating efficient retrieval techniques with expressive learned similarity functions. We establish Mixture-of-Logits (MoL) as a universal approximator capable of expressing all learned similarity functions, and demonstrate MoL's practical applications in recommendation systems and question answering tasks. We further propose techniques to retrieve the approximate top $K$ results using MoL with a tight bound. Through extensive experimentation, we show that MoL, enhanced by our proposed mutual information-based load balancing loss, sets new state-of-the-art results across heterogeneous scenarios, including sequential retrieval models in recommendation systems and finetuning language models for question answering; and our approximate top-k retrieval with learned similarities outperforms baselines by up to \maxspeedup in latency, while achieving $>.99$ recall rate compared to exact algorithms.
}

Retrieval plays a fundamental role in recommendation systems, search, and natural language processing (NLP) by efficiently finding relevant items from a large corpus given a query. Dot products have been widely used as the similarity function in such tasks, enabled by Maximum Inner Product Search (MIPS) algorithms for efficient retrieval. However, state-of-the-art retrieval algorithms have migrated to learned similarities. These advanced approaches encompass multiple query embeddings, complex neural networks, direct item ID decoding via beam search, and hybrid solutions. Unfortunately, we lack efficient solutions for retrieval in these state-of-the-art setups. Our work addresses this gap by investigating efficient retrieval techniques with expressive learned similarity functions. We establish Mixture-of-Logits (MoL) as a universal approximator of similarity functions, demonstrate that MoL's expressiveness can be realized empirically to achieve superior performance on diverse retrieval scenarios, and propose techniques to retrieve the approximate top-$k$ results using MoL with tight \changed{error} bounds. Through extensive experimentation, we show that MoL, enhanced by our proposed mutual information-based load balancing loss, sets new state-of-the-art results across heterogeneous scenarios, including sequential retrieval models in recommendation systems and finetuning language models for question answering; and our approximate top-$k$ algorithms outperform baselines by up to \maxspeedup in latency while achieving $>.99$ recall rate compared to exact algorithms.~\footnote{Our code and model checkpoints are available at \githuburl.}

\end{abstract}

\cut{
\begin{CCSXML}
<ccs2012>
<concept>
<concept_id>10002951.10003317.10003338.10003340</concept_id>
<concept_desc>Information systems~Probabilistic retrieval models</concept_desc>
<concept_significance>500</concept_significance>
</concept>
<concept>
<concept_id>10002951.10003317.10003338.10003346</concept_id>
<concept_desc>Information systems~Top-k retrieval in databases</concept_desc>
<concept_significance>300</concept_significance>
</concept>
<concept>
<concept_id>10010147.10010257.10010258.10010259.10003343</concept_id>
<concept_desc>Computing methodologies~Learning to rank</concept_desc>
<concept_significance>300</concept_significance>
</concept>
<concept>
<concept_id>10002951.10003260.10003261.10003271</concept_id>
<concept_desc>Information systems~Personalization</concept_desc>
<concept_significance>100</concept_significance>
</concept>
<concept>
<concept_id>10002951.10003317.10003347.10003348</concept_id>
<concept_desc>Information systems~Question answering</concept_desc>
<concept_significance>100</concept_significance>
</concept>
</ccs2012>
\end{CCSXML}

\ccsdesc[500]{Information systems~Probabilistic retrieval models}
\ccsdesc[300]{Information systems~Top-k retrieval in databases}
\ccsdesc[300]{Computing methodologies~Learning to rank}
\ccsdesc[100]{Information systems~Personalization}
\ccsdesc[100]{Information systems~Question answering}
}

\begin{CCSXML}
<ccs2012>
   <concept>
       <concept_id>10002951.10003317.10003347.10003348</concept_id>
       <concept_desc>Information systems~Question answering</concept_desc>
       <concept_significance>500</concept_significance>
       </concept>
   <concept>
       <concept_id>10002951.10003317.10003347.10003350</concept_id>
       <concept_desc>Information systems~Recommender systems</concept_desc>
       <concept_significance>500</concept_significance>
       </concept>
   <concept>
       <concept_id>10002951.10003317.10003338.10003342</concept_id>
       <concept_desc>Information systems~Similarity measures</concept_desc>
       <concept_significance>500</concept_significance>
       </concept>
   <concept>
       <concept_id>10002951.10003317.10003338.10003346</concept_id>
       <concept_desc>Information systems~Top-k retrieval in databases</concept_desc>
       <concept_significance>500</concept_significance>
       </concept>
   <concept>
       <concept_id>10002951.10003317.10003338.10003343</concept_id>
       <concept_desc>Information systems~Learning to rank</concept_desc>
       <concept_significance>500</concept_significance>
       </concept>
   <concept>
       <concept_id>10002951.10003317.10003338.10003340</concept_id>
       <concept_desc>Information systems~Probabilistic retrieval models</concept_desc>
       <concept_significance>500</concept_significance>
       </concept>
   <concept>
       <concept_id>10010147.10010178.10010179</concept_id>
       <concept_desc>Computing methodologies~Natural language processing</concept_desc>
       <concept_significance>300</concept_significance>
       </concept>
   <concept>
       <concept_id>10002951.10003317.10003331.10003271</concept_id>
       <concept_desc>Information systems~Personalization</concept_desc>
       <concept_significance>300</concept_significance>
       </concept>
 </ccs2012>
\end{CCSXML}

\ccsdesc[500]{Information systems~Similarity measures}
\ccsdesc[500]{Information systems~Top-k retrieval in databases}
\ccsdesc[500]{Information systems~Learning to rank}
\ccsdesc[500]{Information systems~Probabilistic retrieval models}
\ccsdesc[300]{Information systems~Question answering}
\ccsdesc[300]{Information systems~Recommender systems}
\ccsdesc[300]{Computing methodologies~Natural language processing}
\ccsdesc[300]{Information systems~Personalization}

\keywords{Nearest Neighbor Search, Learned Similarities, Top-K Retrieval, Vector Databases, Recommendation Systems, Question Answering} 


\maketitle

\setcounter{footnote}{0}
\vspace{.2em}
\section{Introduction}

\label{sec:introduction}

Retrieval requires efficient storing, indexing, and querying relevant candidate items represented by high-dimensional vectors. Retrieval is widely used as the initial preprocessing stage for \nlprevision{internet} applications such as recommendations, search, \nlprevision{question answering}, and natural language processing that operate over corpus with up to billions of items~\citep{ytdnn_goog_recsys16,gillick2018endtoend,mind_baba_cikm19,dpr_emnlp20,rag_neurips20,retro_dm_icml22}. \cut{Vector databases generalize the retrieval stage in a typical information retrieval pipeline, where the query- and the item- embeddings are obtained in a dual-encoder setup, and dot products are applied on top of such embeddings as the similarity function for measuring relevance.} In many concrete use cases, such as vector databases~\citep{faiss_tbd21}, the query- and the item- embeddings are learned with deep neural networks in a dual-encoder setup, and dot products are applied on top of such embeddings as the similarity function for measuring relevance.

Despite the popularity of dot products and numerous work done to improve their efficiency~\citep{pq_nns_pami11,alsh_ping_neurips2014,hnsw_tpami18,tpuknn_goog_neurips22}, state-of-the-art retrieval algorithms have long moved to various learned similarity functions. Their most basic versions preserve some dot product-related structures, but turn either the query or the item into multiple embeddings, and rely on a max operator to combine those similarity values~\citep{mind_baba_cikm19,colbert_sigir20}. As another example, Probabilistic Label Trees (PLTs)~\citep{plt_icml16} and Tree-based Deep Models (TDMs)~\citep{tdm_kdd18,otm_icml20} map items to leaf nodes in a tree, and reduce retrieval to beam search by making decisions sequentially using learned classifiers while traversing trees from root to leaf. More recent work on generative retrieval 
directly map the query to the item ids in sequence-to-sequence or decoder-only setups~\citep{genre_iclr21,seal_neurips22,dsi_neurips22,nci_neurips22,genret_neurips23}. Combinations of these approaches have also been studied, with some performing coarse-grained retrieval with generative approaches, followed by re-ranking using dot products~\citep{dr_cikm21}. Finally, the similarity function can be directly parameterized by carefully designed deep neural networks that take various forms~\citep{ncf_www17,mos_iclr18,colbertv2_naacl22,zhai23kdd}.

Supporting efficient retrieval with these diverse learned similarities\cut{in vector databases} is challenging. Learned similarity functions are generally expensive to compute; with learned index structures, traversing a binary tree with 4 million items requires running beam search for 20 non-parallelizable steps~\citep{tdm_kdd18}, while 
recommendation and NLP deployments commonly need to handle billions of items~\citep{pixie_pins_www18,mind_baba_cikm19,borisyuk2024linr_cikm24} with \nlprevision{a latency budget of tens of milliseconds}.
When an arbitrary deep neural network is employed, it's no longer clear how to perform top-$K$ retrieval other than through brute-force~\citep{ncf_www17} or heuristics~\citep{zhai23kdd}. \nlprevision{While graph-based methods can be used to prune the search space~\citep{hnsw_tpami18,diskann_neurips19,flashlight_log22,cagra_icde24}, such methods tend to be much slower compared with MIPS algorithms leveraging quantization at high recall rates~\citep{scann_icml20,annbenchmarks-website}, and their performance can degrade when the similarity function is not a distance metric~\citep{yandex2018neurips}.}
What is worse, these algorithms vary significantly in terms of their exact formulations, and the lack of a universal interface makes it even more difficult \nlprevision{to design} a general solution for efficient retrieval.

Taking a step back, our key insight is that learned similarity approaches are but different ways to increase \nlprevision{the} expressiveness of the retrieval stage. Formally, for a query $q$ and an item $x$, the expressiveness of the similarity function boils down to deriving alternative parameterizations of $p(x | q)$ matrices, with full rank matrices being the most expressive among them. Dot products, on the other hand, induces a low-rank bottleneck due to the dimensionality of the embedding, i.e.,  $\ln p(x | q) \propto \langle f(q), g(x)\rangle$ ($f(q), g(x) \in \mathbb{R}^d$). This cannot be alleviated by simply increasing the embedding dimension $d$, due to memory bandwidth being the main bottleneck in modern dot-product based retrieval systems, such as vector databases~\citep{faiss_tbd21,tpuknn_goog_neurips22,zhai23kdd}, and overfitting issues 
that come with larger embedding dimensions due to the common need to co-train or finetune query- and item- encoders from data~\citep{ytdnn_goog_recsys16,mind_baba_cikm19,dpr_emnlp20, dr_cikm21,sentencet5_acl2022,gtr_emnlp22,zhai2024actions_icml24}. 

This insight enables us to support efficient retrieval with \nlprevision{expressive} learned similarity functions by approximating them with Mixture-of-Logits (MoL).
To the best of our knowledge, this is the first work that tackles the problem of efficient retrieval with universal learned similarities, \nlprevision{while setting new state-of-the-art results across \emph{heterogeneous} scenarios}.
We first show that Mixture-of-Logits  is a universal approximator as it can express $p(x|q)$ matrices of arbitrary high rank, and hence approximate \emph{all} learned similarity functions \nlprevision{(Section~\ref{sec:mol_expressiveness})}.  \nlprevision{
Our work lays theoretical foundations for MoL's empirical impressive performance gains of 20\%-30\% across Hit Rate@50-400 on web-scale corpus with hundreds of millions to billions of items~\citep{zhai23kdd,borisyuk2024linr_cikm24}, and further enables MoL to be effectively applied across diverse retrieval scenarios, from large-scale recommendation systems to finetuning language models for question answering (Section~\ref{sec:mol-adaptation})}.
We next propose techniques to retrieve the approximate top-$K$ results using MoL with a tight \changed{error} bound (Section~\ref{sec:algorithm}).
Our solution leverages the existing widely used APIs of vector databases like top-K queries, thus benefiting from prior work on efficient vector search like MIPS~\citep{pq_nns_pami11,alsh_ping_neurips2014,faiss_tbd21,scann_icml20}.
We empirically compare our techniques with existing approaches, showing that MoL sets new state-of-the-art results on recommendation retrieval \nlprevision{and question answering} tasks, and our approximate top-k retrieval with learned similarities outperforms baselines by up to \maxspeedup in latency, while achieving $>.99$ recall rate of exact algorithms (Section~\ref{sec:evaluation}). Importantly, our approach with learned similarities efficiently utilizes modern accelerators due to MoL's higher arithmetic intensity~\citep{zhai23kdd}, which results in MIPS-level inference latency and throughput. Overall, our work provides strong theoretical and practical justifications to migrate away from the broadly adopted MIPS solution in vector databases to Retriev\underline{a}l w\underline{i}th Learned Similarities (RAILS) on GPUs.
\vspace{-.8em}
\section{Mixture of Logits}
\label{sec:mol}
\begin{table*}[t]
\vspace{-.5em}
\begin{center}
\small
\begin{tabular}{c|l}
\toprule
\bf Notation                  & \bf Description \\
\hline
$q$ ($Q$, $|Q|$) & query (set of queries, number of queries) \\
$x$ ($X$, $|X|$) & item (set of items, number of items) \\
$\phi(q, x)$ & the learned similarity function, i.e., Mixture-of-Logits (MoL). \\
\multirow{2}{*}{$P$ ($P_q$, $P_x$)} &\multirow{2}{13cm}{MoL uses $P$ \changed{pairs of} low-rank embeddings ("component-level embeddings") to represent $q$ and $x$. With the (batched) outer product form of MoL, $P_q$ and $P_x$ are the numbers of embeddings for $q$ and $x$, respectively; $P = P_q \times P_x$.} \\ 
& \\
$\pi_{p}(q,x)$ ($\pi_{p_q,p_x}(q,x)$) & weight for the $p$-th (or $p_q$-th by $p_x$-th with outer product) embedding set for $(q, x)$. \\
$f(q)$ ($f_p(q)$) & learned embedding for the query ($p$-th component-level query embedding) \\
$g(x)$ ($g_p(x)$) & learned embedding for the item  ($p$-th component-level item embedding) \\
$d_P$             & dimensionality of low-rank (component-level) embeddings. $f_p(q), g_p(q) \in \mathbb{R}^{d_P}$. \\
$\langle f(q), g(x)\rangle$ & the dot product similarity function: $g(x)^T f(q)$. \changed{$\langle f_p(q), g_p(x)\rangle$ denotes the dot product for the $p^{th}$ embedding pair.}\\
\bottomrule
\end{tabular}
\vspace{0.5em}
\caption{Table of Notations.}
\label{table:notations}
\ifdefined\singlecolumnpdf
\vspace{0.0em}
\else
\vspace{-2em}
\fi
\end{center}
\end{table*}

In this section, we describe Mixture of Logits (MoL), \nlprevision{propose a load balancing loss to improve conditional computations in MoL}, prove that MoL is expressive enough to represent any learned similarity function, and demonstrate how to apply MoL to \changed{diverse} retrieval tasks. Table~\ref{table:notations} summarizes the notations in this paper.

We first describe Mixture of Logits (MoL).

\paragraph{Mixture of Logits (MoL)} MoL~\citep{zhai23kdd} assumes that the query $q$ and the item $x$ are already mapped to $P$ \changed{pairs} of low-rank embeddings (``component-level embeddings''), $f_{p}(q), g_p(x) \in \mathbb{R}^{d_P}$, where $f_{p}(q), g_p(x)$ are parameterized with some neural networks based on query and item features, respectively, and $d_P$ is the dimensionality of the low-rank embeddings. MoL then calculates the similarity between the query $q$ and the item $x$ by applying adaptive gating weights, $\pi_p(q, x) \in [0, 1]$, to the inner products of these $P$ pairs of low-rank embeddings, or $\langle f_{p}(q), g_p(x) \rangle$s. Note that prior work assumes that $\sum_p \pi_p(q, x) = 1$~\citep{zhai23kdd,borisyuk2024linr_cikm24}, 
but this does not affect our analyses in this paper.
Following~\cite{zhai23kdd}:
\vspace{-.3em}
\begin{equation}
\phi(q, x) = \sum_{p=1}^P \pi_p(q, x) \langle f_p(q), g_p(x)\rangle
\label{eq:mol-naive}
\end{equation}
\vspace{-.8em}

To extend this to large-scale datasets and to enable hardware-efficient implementations on accelerators like GPUs, Equation~\ref{eq:mol-naive} was further modified by decomposing those $P$ dot products as (batched) outer products of $P_q$ query-side and $P_x$ item-side embeddings, where $P_q \times P_x = P$, and applying l2-norm to \changed{the embeddings}: 

\begin{equation}
\phi(q, x) = \sum_{p_q=1}^{P_q} \sum_{p_x=1}^{P_x} \pi_{p_q, p_x}(q, x) \left\langle \frac{f_{p_q}(q)}{||f_{p_q}(q)||_2}, \frac{g_{p_x}(x)}{||g_{p_x}(x)||_2}\right\rangle 
\label{eq:mol-simplified}
\end{equation}
We use Equation~\ref{eq:mol-naive} and~\ref{eq:mol-simplified} interchangeably as the MoL form to analyze throughout the rest of this paper, given that the embedding normalization for $f_{p_q}(q)$s and $g_{p_x}(x)$s can be precomputed.

\paragraph{Mixture of Logits (MoL) with load balancing regularization loss.} \nlprevision{We further observe $\pi_p(q, x)$ defines conditional computation to be performed over the $p$ low-rank embedding pairs, or $(f_p(q), g_p(x))$s. $\pi_p(q, x)$ should hence satisfy two conditions: 
\begin{itemize}[leftmargin=*]
\item Globally, the $p$ low-rank embedding pairs, or $(f_p(q), g_p(x))$s, should receive a similar number of training examples even when $p$ is large and $\pi_p(q, x)$ is sparse, with load distributed evenly across the $p$ pairs. One way to do this is to maximize the entropy $H(p)$ over these embedding pairs.
\item The low-rank embedding pairs used to compute \changed{a particular} $\phi(q, x)$ should be non-uniform and ideally sparse; e.g., it's desirable to avoid the degenerate solution where $\pi_p(q, x) = \frac{1}{p}$. One way to do this is to minimize the conditional entropy $H(p|(q, x))$ of $p$ given (query, item) pairs.
\end{itemize}
Given these two desired conditions, we propose a mutual information-based regularization loss for load balancing, defined as 
\begin{equation}
\mathcal{L}_{MI} = - H(p) + H(p|(q, x))
\label{eq:mi-regularization-loss}
\end{equation}
with the overall training loss as 
\begin{equation}
-\log \frac{\exp(\phi(q, x))}{\exp(\phi(q, x)) + \sum_{x' \in \mathbb{X}} \exp(\phi(q, x'))} + \alpha \mathcal{L}_{MI}
\label{eq:mol-loss}
\end{equation}
where the first part of Equation~\ref{eq:mol-loss} is the sampled softmax loss used in~\cite{zhai23kdd}, and the second part adjusts the weight for the mutual information-based load balancing loss with a hyperparameter $\alpha$. 
}

\begin{figure*}[t]
    \vspace{-0.9em}
    \centering
    \includegraphics[width=0.75\linewidth]{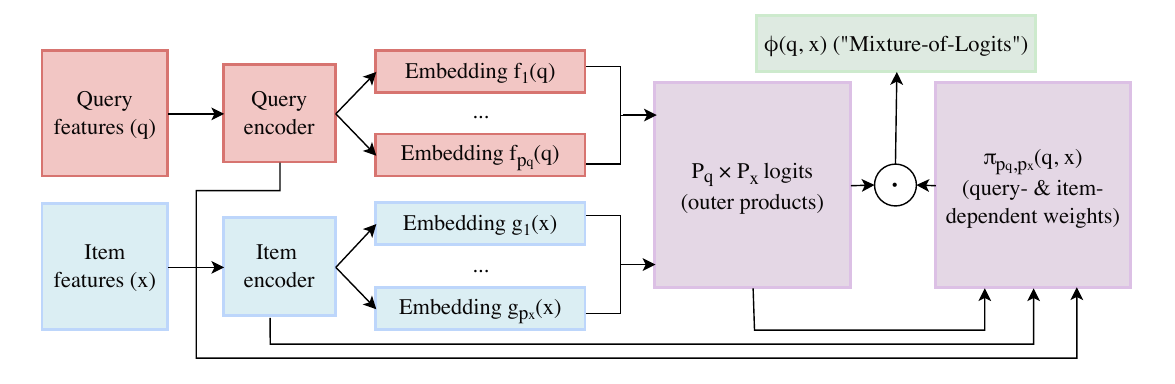}
    \vspace{-1.5em}
    \caption{Mixture-of-logits (MoL) learned similarity.}
    \label{fig:mol}
    \vspace{-0.6em}
\end{figure*}

\subsection{Expressiveness of Mixture of Logits}
\label{sec:mol_expressiveness}
Now we show that any high-rank matrix can be decomposed into a mixture of logits based on low-rank matrices, i.e., MoL is a universal approximator \changed{for all similarity functions}. Without loss of generality, we prove the following:

\begin{theorem}
    \label{theorem:main}
    \textbf{MoL decomposition}: Let $A$ be a matrix of $n\times m$, where $n\leq m$. There exists $\pi_1, B_1, \pi_2, B_2, \cdots, \pi_p, B_p$ such that $|A-\sum_{p=1}^{P}\pi_p \circ B_i|<\epsilon$, where $\epsilon$ is a small positive number. Here $B_i$ is a matrix of $n\times m$ with rank equal to or less than $d$, and $\pi_1, \pi_2, \cdots, \pi_P$ are $n\times m$ matrices that together define a probability distribution over each $(i, j)$ tuple, such that $\sum_{p=1}^{P}\pi_{p}(i,j)=1, 0\leq \pi_{p}(i,j) \leq 1$ for any $1\leq i\leq n, 1\leq j\leq m, 1\leq p\leq P$.
\end{theorem}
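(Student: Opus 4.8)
The plan is to give an explicit construction rather than an abstract existence argument, since the claim reduces to decomposing $A$ into low-rank pieces that a position-dependent convex combination can reassemble. First I would unpack the target expression: because $\pi_1,\ldots,\pi_P$ define a probability distribution at every coordinate, the $(i,j)$ entry of $\sum_{p=1}^P \pi_p \circ B_p$ is $\sum_{p=1}^P \pi_p(i,j)\, B_p(i,j)$, i.e.\ a convex combination of the scalars $B_p(i,j)$. So the goal is to pick low-rank matrices $B_p$ and gating weights $\pi_p$ so that this per-entry convex combination equals, or comes within $\epsilon$ of, $A(i,j)$ for every $(i,j)$.

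The cleanest route uses the SVD. Write $A = \sum_{k=1}^{r} \sigma_k u_k v_k^{\top}$ with $r = \mathrm{rank}(A) \le n$, a sum of $r$ rank-one matrices. I would partition these $r$ terms into $P = \lceil r/d \rceil$ consecutive groups of size at most $d$, and let $G_p$ be the sum of the rank-one terms in group $p$, so that $\mathrm{rank}(G_p) \le d$ and $A = \sum_{p=1}^P G_p$. Setting $B_p = P\, G_p$ (rescaling preserves $\mathrm{rank}(B_p)\le d$) and taking the uniform gate $\pi_p(i,j) = 1/P$, the constraints $\sum_p \pi_p(i,j) = 1$ and $0 \le \pi_p(i,j)\le 1$ hold, and
\begin{equation}
\sum_{p=1}^P \pi_p \circ B_p = \frac{1}{P}\sum_{p=1}^P P\, G_p = \sum_{p=1}^P G_p = A ,
\end{equation}
so the difference is exactly $0 < \epsilon$. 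When $d \ge n$ this degenerates to $P=1$, $B_1 = A$, $\pi_1 \equiv 1$.

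I would also record an alternative, more genuinely ``MoL-like'' construction that exploits the freedom in $\pi$: partition the $m$ columns into blocks $C_1,\ldots,C_P$ of size at most $d$, let $B_p$ agree with $A$ on the columns of $C_p$ and vanish elsewhere (so $\mathrm{rank}(B_p)\le |C_p|\le d$), and set $\pi_p(i,j)=1$ when $j\in C_p$ and $0$ otherwise. This again reconstructs $A$ exactly, now through hard, position-dependent gates, which mirrors the adaptive weighting $\pi_p(q,x)$ that MoL actually learns.

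The construction is routine; the only thing to monitor is the interplay between the rank budget $d$ and the component count $P$. Since the theorem imposes no upper bound on $P$, there is no real obstacle: one simply takes $P = \lceil \mathrm{rank}(A)/d\rceil$ large enough to absorb the full rank of $A$ into $d$-sized blocks. The point worth emphasizing in the write-up is exactly this accounting: a single dot-product head is capped at rank $d$, but mixing $P$ such heads with per-entry convex weights raises the effective rank to $Pd$, which is why MoL can represent arbitrarily high-rank $p(x \mid q)$ matrices and hence approximate any learned similarity function.
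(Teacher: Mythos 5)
Your construction is correct and proves the theorem as stated, but it takes a genuinely different route from the paper. You group the rank-one terms of an SVD into $P=\lceil \mathrm{rank}(A)/d\rceil$ blocks of rank at most $d$ and use constant gates $\pi_p\equiv 1/P$ (or hard column-routing gates), obtaining an \emph{exact} reconstruction; the price is that $P$ must grow linearly with $\mathrm{rank}(A)/d$, and the gates carry no information, so the argument really only exercises subadditivity of rank ($\mathrm{rank}(\sum_p B_p)\le Pd$). The paper instead fixes $P=2$: writing $A=UV$ with $A_{ij}=\sum_{k=1}^{n}U_{ik}V_{kj}$, it absorbs the tail $\sum_{k>d}U_{ik}V_{kj}$ into a per-entry multiplicative correction $\lambda_{ij}$ of the rank-$d$ truncation, and then realizes each $\lambda_{ij}$ as a convex combination of the two extremes $\lambda_{\min},\lambda_{\max}$, so that $B_1=\lambda_{\min}UI_dV$ and $B_2=\lambda_{\max}UI_dV$ are both rank $\le d$ and all of the expressive power sits in the entry-adaptive gate $\pi(i,j)=(\lambda_{\max}-\lambda_{ij})/(\lambda_{\max}-\lambda_{\min})$; the $\epsilon$ slack in the statement exists precisely to handle entries where the truncated sum vanishes, which your construction never needs. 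What each buys: yours is simpler, exact, and makes the $Pd$ rank-accounting transparent, but the paper's version establishes the stronger and more architecture-relevant fact that \emph{two} rank-$d$ components with adaptive gating already suffice for arbitrary rank --- which is the actual justification for MoL being a universal approximator at small fixed $P$, rather than at a $P$ that scales with the rank of the target $p(x\mid q)$ matrix. Your closing remark that the mixture "raises the effective rank to $Pd$" is true for uniform gates but understates the mechanism the theorem is meant to showcase.
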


We can think about $n$ as the number of queries and $m$ the number of items (or vice versa). First, the theorem trivially holds if the rank of $A$ is less than or equal to $d$ ($d \leq n$):

\begin{lemma}
    \textbf{MoL decomposition when $Rank(A)\leq d$}: Let $A$ be a matrix as defined in Theorem~\ref{theorem:main}. If the rank of $A$ is less than or equal to $d$, then we have $A = \pi \circ A$, where $\pi(i,j) =1$ for any $1\leq i\leq n, 1\leq j\leq m$.
\end{lemma}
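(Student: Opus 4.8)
The plan is to exhibit the decomposition explicitly with a single component, since when $\mathrm{rank}(A) \leq d$ no genuine mixing is needed. Concretely, I would take $P = 1$, set $B_1 = A$, and let $\pi$ (playing the role of $\pi_1$) be the all-ones matrix, i.e.\ $\pi(i,j) = 1$ for every $1 \leq i \leq n$ and $1 \leq j \leq m$. The rank constraint $\mathrm{rank}(B_1) \leq d$ then holds immediately by hypothesis, since $B_1 = A$ and $\mathrm{rank}(A) \leq d$.

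Next I would verify that $\pi$ satisfies the probability-distribution conditions required by Theorem~\ref{theorem:main}. With only one component the normalization reduces to $\sum_{p=1}^{P}\pi_p(i,j) = \pi(i,j) = 1$ for each $(i,j)$, and the bound $0 \leq \pi(i,j) = 1 \leq 1$ is trivially met, so this single weight matrix is an admissible (degenerate) per-entry probability distribution.

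Finally I would check the reconstruction. Because the Hadamard product with the all-ones matrix acts as the identity, the $(i,j)$ entry of $\pi \circ A$ equals $\pi(i,j)\,A(i,j) = 1 \cdot A(i,j) = A(i,j)$, so $\pi \circ A = A$ exactly. This gives reconstruction error $0 < \epsilon$, and hence the conclusion of Theorem~\ref{theorem:main} holds in this regime with $P = 1$.

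There is no substantive obstacle here: the lemma is a degenerate base case whose sole purpose is to dispose of the low-rank regime and thereby set up the real argument for the general high-rank $A$, where one must split $A$ into several factors each of rank at most $d$ under a genuine, non-constant gating distribution. The only point requiring the slightest care is confirming that a single constant gating matrix is admissible under the simplex constraints, which, as shown above, is immediate.
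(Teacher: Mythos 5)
Your proposal is correct and matches the paper's treatment: the paper states this lemma as trivially holding via exactly the single-component decomposition $P=1$, $B_1=A$, with $\pi$ the all-ones matrix, and offers no further proof since the Hadamard product with the all-ones matrix is the identity. Your additional verification of the simplex constraints and the rank bound is sound but fills in only routine details.
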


\cut{
\todo{Fix the proof to prove $Rank(A)>d$ instead of $Rank(A)=n$}
\todo{Fix the proof to cover the case where $\sum_{k=1}^dU_{ik}V_{kj}=0$. Need to approximate the matrix with $|A-A'|<\epsilon$}
}

Then we prove for the case where the rank of $A$ is greater than $d$. Without loss of generality, we prove the case where the matrix has full rank, i.e., $Rank(A)=n$:

\begin{lemma}
    \textbf{MoL decomposition when $Rank(A)=n$}: Let $A$ be a matrix as defined in Theorem~\ref{theorem:main}. Then there exists $\pi, B_1, B_2$ such that $|A-(\pi \circ B_1 + (1-\pi) \circ B_2)|< \epsilon$, where $Rank(B_1)\leq d$, $Rank(B_2)\leq d$, and $0\leq \pi(i,j) \leq 1$ for $1\leq i\leq n, 1\leq j \leq m$. 
\end{lemma}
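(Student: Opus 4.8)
The plan is to exploit the full entrywise freedom in $\pi$. Since the only constraint on $\pi$ is that each $\pi(i,j)\in[0,1]$, the scalar $\pi(i,j)(B_1)_{ij} + (1-\pi(i,j))(B_2)_{ij}$ is an arbitrary convex combination of $(B_1)_{ij}$ and $(B_2)_{ij}$, and therefore sweeps out the entire closed interval between these two values as $\pi(i,j)$ ranges over $[0,1]$. Consequently, if I can exhibit two matrices $B_1,B_2$ of rank at most $d$ whose entries \emph{sandwich} those of $A$, i.e. $\min\{(B_1)_{ij},(B_2)_{ij}\}\le A_{ij}\le\max\{(B_1)_{ij},(B_2)_{ij}\}$ for every $(i,j)$, then a suitable choice of $\pi$ reproduces $A$ exactly, and the claimed bound holds for every $\epsilon>0$.

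Constructing such a sandwich is easy and costs only rank one, independent of the full rank $n$ of $A$. First I would set $c:=\max_{i,j}|A_{ij}|+1>0$ and let $J$ denote the $n\times m$ all-ones matrix, which has rank $1$. Taking $B_1=cJ$ and $B_2=-cJ$ makes every entry of $B_1$ equal to $+c$ and every entry of $B_2$ equal to $-c$; both have rank $1\le d$ (using $d\ge 1$), and by the choice of $c$ every entry of $A$ obeys $-c\le A_{ij}\le c$, so the sandwich condition holds. Solving the matching equation $\pi(i,j)\,c+(1-\pi(i,j))(-c)=A_{ij}$, i.e. $c\,(2\pi(i,j)-1)=A_{ij}$, yields the explicit gate
\begin{equation}
\pi(i,j)=\frac{1}{2}\left(1+\frac{A_{ij}}{c}\right).
\end{equation}
Because $|A_{ij}|\le c$, we obtain $\pi(i,j)\in[0,1]$ as required, and by construction $\pi\circ B_1+(1-\pi)\circ B_2=A$ exactly, whence $|A-(\pi\circ B_1+(1-\pi)\circ B_2)|=0<\epsilon$.

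The reason the construction deliberately uses \emph{two} dominating matrices of opposite sign is to keep the matching equation linear in $\pi(i,j)$ with a fixed nonzero coefficient $c$. The tempting shortcut of reusing a single rank-$d$ factorization $M_{ij}=\sum_{k=1}^{d}U_{ik}V_{kj}$ of (an approximation to) $A$, setting $B_2=0$ and $\pi(i,j)=A_{ij}/M_{ij}$, breaks down exactly at entries where $M_{ij}=0$, and additionally imposes sign and magnitude conditions on $M$; it is precisely this division-by-zero degeneracy that forces an $\epsilon$-approximation rather than an exact identity. Thus the hard part in any factorization-based route is handling those vanishing entries, whereas the two-sided constant construction sidesteps the degeneracy altogether. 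With that design choice made, the only thing left to verify is that the sandwich interval $[-c,c]$ covers every entry of $A$, which the choice of $c$ guarantees, and that $B_1,B_2$ remain low rank, which is immediate since both are scalar multiples of $J$.
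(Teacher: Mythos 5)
Your proof is correct, and it actually establishes something slightly stronger than the statement: the error is exactly zero rather than merely below $\epsilon$, and you never use the hypothesis $\mathrm{Rank}(A)=n$, so the argument covers every matrix at once. It is, however, a genuinely different route from the paper's. The paper writes $A=UI_nV$, truncates to the first $d$ components, introduces per-entry correction factors $\lambda_{ij}=1+\bigl(\sum_{k=d+1}^{n}U_{ik}V_{kj}\bigr)/\bigl(\sum_{k=1}^{d}U_{ik}V_{kj}\bigr)$, and takes $B_1=\lambda_{\min}UI_dV$, $B_2=\lambda_{\max}UI_dV$ with $\pi(i,j)=(\lambda_{\max}-\lambda_{ij})/(\lambda_{\max}-\lambda_{\min})$; the gate then interpolates between two scalings of a genuine rank-$d$ truncation of $A$. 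Your construction instead takes $B_1=cJ$, $B_2=-cJ$ and pushes all information about $A$ into the gate. What the paper's version buys is that the mixture components retain structure of $A$ (closer in spirit to how MoL is trained), at the cost of the division-by-zero degeneracy you correctly identify — indeed, when $\sum_{k=1}^{d}U_{ik}V_{kj}=0$ the paper's $A'_{ij}=\lambda_{ij}\cdot 0=0$ regardless of how $\lambda_{ij}$ is redefined, so the claimed bound $|A-A'|\leq\epsilon$ does not actually follow at such entries; your two-sided constant construction sidesteps this entirely. What your version gives up is any pretense that the low-rank logits are informative: the decomposition becomes manifestly degenerate, with the full-rank expressiveness supplied solely by the entrywise gate. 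That is consistent with the lemma as stated and with the paper's own remark that the decomposition is not intended as a distillation, but it is worth being aware that your proof demonstrates the universality claim in its weakest possible sense.
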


\begin{proof}
    Because $A$ is a matrix of rank $n$, it can be rewritten as $A=UI_nV$, where $I_n$ is an identity matrix with rank $n$. \cut{\textcolor{red}{jz: might be good to call out $m$ $n$ explicitly in the following writing}} 
    Thus, $A_{ij} = \sum_{k=1}^n U_{ik}V_{kj}, 1\leq i\leq n, 1\leq j\leq m$. Let $A'$ be a matrix of $n\times m$, where $A'_{ij} = \lambda_{ij} \cdot \sum_{k=1}^d U_{ik}V_{kj}$ for $1\leq i\leq n, 1\leq j\leq m$. Here, $\lambda_{ij} =1 + \frac{\sum_{k=d+1}^nU_{ik}V_{kj}}{\sum_{k=1}^d U_{ik}V_{kj}}$ if $\sum_{k=1}^d U_{ik}V_{kj} \neq 0$, otherwise $\lambda_{ij} = 1 + \frac{\sum_{k=d+1}^nU_{ik}V_{kj}}{\epsilon}$. Thus, we have $|A-A'|\leq \epsilon$.
    
    Let $\lambda_{min} = \min{\lambda_{ij}}$, and $\lambda_{max}=\max{\lambda_{ij}}$. Let $B_1 = \lambda_{min}U D_{n,d} V$, $B_2 = \lambda_{max}U D_{n,d} V$, where $D_{n,d}$ denotes an $n$-by-$n$ diagonal matrix with the first $d$ elements of the diagonal being $1$s and the rest being $0$s. We have $A'_{ij} =\lambda_{ij} \sum_{k=1}^d U_{ik}V_{kj} = \pi(i,j)\cdot B_{1ij} + (1-\pi(i, j))\cdot B_{2ij}$, where $\pi(i,j) = \frac{\lambda_{max} - \lambda_{ij}}{\lambda_{max} - \lambda_{min}}$. Because $\lambda_{min}\leq \lambda_{ij}\leq \lambda_{max}$, we have $0\leq \pi(i, j) \leq 1$.     
    \cut{\textcolor{red}{jz: may want to use something other than $\pi_{i,j}$ here}}
    
    Thus, we have constructed $B_1, B_2, \pi$ such that $|A-(\pi\circ B_1+(1-\pi) \circ B_2)|=|A-A'|\leq \epsilon$.
\end{proof}

\textit{Remark}
    Here, we have shown that any high-rank matrix can be expressed as a mixture of logits of two low-rank matrices. Note that our decomposition is not intended to be used as a distillation of the original high-rank matrix. 
    It is likely prohibitively expensive to populate the full matrix with a learned similarity function. In addition, our proof also does not indicate that having
    two mixture components
    is sufficient to train the embeddings and the learned similarity function. It is well-known that overparameterization is often necessary to enable efficient and performant training. 

\subsection{Applying MoL to Heterogeneous Use Cases}
\label{sec:mol-adaptation}

\nlprevision{We now discuss how to apply MoL to retrieval tasks in different domains. Parameterization of the low-rank, component-level embeddings, or $f_{p}(q), g_p(x) \in \mathbb{R}^{d_P}$, plays an important role in realizing MoL's theoretical expressiveness in practice, as suggested by prior work~\cite{borisyuk2024linr_cikm24}. We discuss two scenarios on the opposite end of the spectrum, one with \emph{a large number of heterogeneous features} -- retrieval in large-scale recommendation systems, followed by another with \textit{a single homogeneous feature} -- finetuning language models for question answering and related NLP use cases, shown in Figure~\ref{fig:mol-adaptation}.}

\paragraph{Retrieval in Large-scale Recommendation Systems.} \nlprevision{Recommendation systems are characterized by the large number of heterogeneous features they use~\cite{ytdnn_goog_recsys16,autoint_cikm19,zhai2024actions_icml24}. This naturally enables some of those features to be utilized on the query- (user-) or on the item-side. For instance, embeddings can be constructed based on cluster ids on both the query-side and the item-side~\cite{borisyuk2024linr_cikm24}. For common benchmark datasets, User ID-based one-hot embeddings~\cite{matrix-factorization-netflix09} represent another possible $g_p(q)$ to use, which we evaluate in Section~\ref{sec:evaluation}.}

\paragraph{Finetuning Language Models for Question Answering.} \nlprevision{In contrast, language models are characterized by their use of homogeneous semantic features, such as wordpieces and sentencepieces~\cite{sentencepiece_emnlp18}. We observe that MoL can be similarly adopted for those use cases. To obtain the $P_X$ item embeddings for MoL, we expand \changed{the} tokenizer's vocabulary with $P_X$ \emph{special aggregation tokens} $X_1, \ldots, X_{P_X}$, and append those $P_X$ tokens at the beginning of every tokenized sequence, $SP_1, \ldots, SP_N$, as illustrated in Figure~\ref{fig:mol-adaptation}~\footnote{Note that many question answering scenarios~\cite{dpr_emnlp20,genre_iclr21,nci_neurips22,gtr_emnlp22,genret_neurips23} utilize bidirectional language models for retrieval, like BERT~\cite{bert_naacl19} or T5~\cite{t5_raffel2023exploringlimitstransferlearning}; for  recent unidirectional language models, we can add $X_1, \ldots, X_{P_X}$ to the end of the input sequence instead.}. These $P_X$ special tokens play similar roles as the CLS token in BERT~\cite{bert_naacl19}, and during finetuning of the language model, are co-trained to aggregate different aspects of information as inputs for MoL.
Additionally, we can design a learned pooling function to adapt pooling policy at an example-level (``Parameterized Pooling'') to improve model quality, which we discuss further in Appendix~\ref{sec:app-exp-emb-parameterization-qa}.
} 


\begin{figure*}[t]
    \vspace{-1em}
    \centering
    \includegraphics[width=0.7\linewidth]{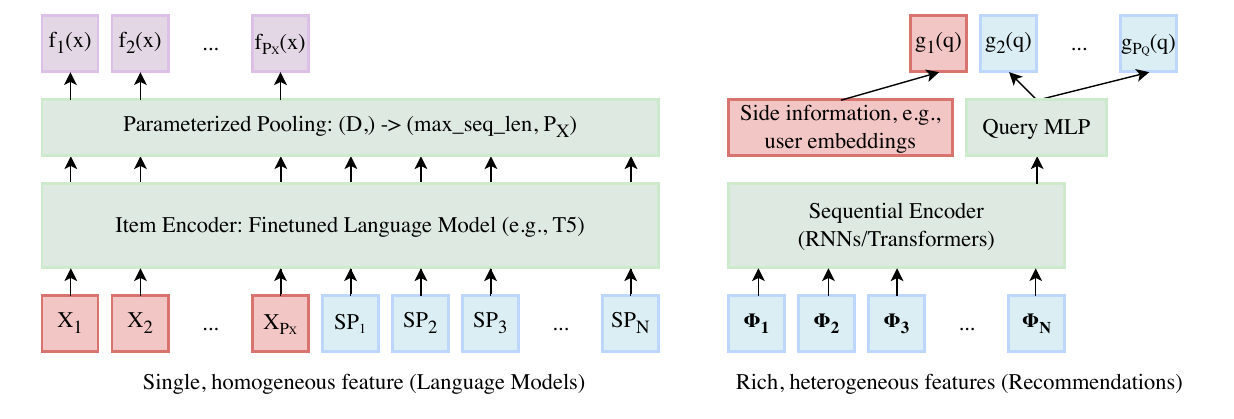}
    \vspace{-1.2em}
    \caption{Illustration of how to \nlprevision{apply} Mixture-of-logits (MoL) learned similarity to various retrieval scenarios, with a language model finetuning use case (characterized by a single homogeneous feature) shown on the left, and a recommendation use case (characterized by a large number of heterogeneous features) shown on the right. More details can be found in Appendix~\ref{sec:app-exp-emb-parameterization-qa}.}
    \label{fig:mol-adaptation}
    \vspace{-1em}
\end{figure*}

\section{Retrieval Algorithms}
\label{sec:algorithm}

In this section, we describe the problem of retrieving the top $K$ items with \mol as well as exact and approximate retrieval algorithms. \cut{to retrieve the top $K$ documents with the highest similarity scores given a query $q$. 
Given that the learned similarity function is expensive, our goal is to invoke the learned similarity functions on a small number of items. At high level, we propose a two-stage retrieval algorithm, where we first leverage cheap dot-product based similarity functions to approximately retrieve candidate items among all the items, and then we invoke the similarity function on the candidate items to find the top $K$ items.} Formally, we define the top $K$ retrieval problem as \changed{follows}: 

\begin{definition}
    \textbf{Top $K$ with \mol}: Let $q$ be a query and $X$ be a set of items, where both the query $q$ and each item $x \in X$ are associated with $P$ embeddings. Together we have $P$ pairs of embeddings, $(f_p(q), g_p(x)), 1\leq p\leq P$. Let $\phi(q, x)=\sum_{p=1}^P\pi_{p}(q,x) \langle f_p(q), g_p(x)\rangle$ be the similarity score of $q, x$, where $x\in X$. 
    The top $K$ query with \mol returns the $K$ items from $X$ with the highest $\phi(q,x)$s.
\end{definition}

For approximate top $K$ retrieval with \mol, we define the gap of the approximate and exact top $K$ results as follows:
\begin{definition}
    \textbf{Gap of approximate top $K$:} Let $q$ be a query and $X_K$ be the set of exact top $K$ items for the query $q$ from a set of items $X$. Let $X^*$ be the \changed{set of} approximate top $K$ results, where $X^*\subseteq X$. Let $S=\min\{\phi(q, x), x\in X^*\}$ and $S'=\max\{\phi(q, x), x\in X_K\setminus X^*\}$. We call $S_{\Delta}=S'-S$ the \emph{gap} of the top $K$ with $X^*$.
\end{definition}

\subsection{Exact algorithm}
\label{sec:algorithm:exact}

The brute-force algorithm to retrieve the exact top $K$ with \mol is to evaluate $\phi(q, x)$ for each query $q$ and item $x$. This algorithm can be prohibitively expensive if the number of items is large. Instead, we describe a more efficient two-pass algorithm to retrieve the exact top $K$ items as shown in Algorithm~\ref{code:algorithm:exact} \changed{(example in Appendix~\ref{appendix:example})}.

\begin{algorithm}[h]
\small
    \caption{Exact top $K$ algorithm.}
    \label{code:algorithm:exact}
    \begin{algorithmic}[1]
        \Require query $q$, a set of items $X$, $f_p(\cdot)$, $g_p(\cdot)$ for constructing the component-level embeddings $f_p(q), g_p(x)$
        \Ensure exact top $K$ items
        \State $G \gets \emptyset$
        \For{$p\in P$}
            \State $X_p \gets \{ g_p(x), x \in X \}$ \Comment{Can be preprocessed.} 
            \State $G \gets G \cup TopKDotProduct(f_p(q), X_p)$ \Comment{Retrieve top $K$ items for each pair of embeddings.}
        \EndFor
        \State $S_{min}\gets \infty$
        \For{$x\in G$}
            \State $s \gets MoL(q, x)$
            \If{$s<S_{min}$} $S_{min}\gets s$
            \EndIf
        \EndFor
        \State $G' \gets \emptyset$
        \For{$p\in P$}
            \State $G' \gets G' \cup RangeDotProduct(f_p(q), S_{min}, X_p)$\Comment{Retrieve all items $x \in X_P$ with $\langle f_p(q), x\rangle \geq S_{min}$.}
        \EndFor
        \State \Return $BruteForceTopKMoL(q, G')$ \Comment{Retrieve the top $K$ items from $G'$ with \mol.}
    \end{algorithmic}
\end{algorithm}

We start by retrieving the top $K$ items with the highest dot product scores for each group of embeddings as the initial candidate set $G$ (line 1-4). Then we evaluate the \mol scores of the items in $G$ and find the minimal MoL score $S_{min}$ (line 5-8). Next we retrieve all items within a distance of $S_{min}$ with the query $q$ as the candidate set $G'$ (line 9-11). Finally, we evaluate the \mol scores of the items in $G'$, 
and return the top $K$ items with the highest scores (line 12).

We argue that Algorithm~\ref{code:algorithm:exact} retrieves the exact top $K$ items with \mol. Let $X_K$ be the set of the exact top $K$ items and $X'$ be the result of Algorithm~\ref{code:algorithm:exact}. Let $x\in X_K$ and $\phi(q, x)$ be the \mol score of $x$ and $q$. Since $x$ has the highest top $K$ score with \mol, $\phi(q,x)\geq S_{min}$. Since the \mol score is a weighted score over the dot product scores, we have $\max\{\langle f_p(q), g_p(x)\rangle, 1\leq p\leq P\}\geq \phi(q, x)\geq S_{min}$. Since Algorithm~\ref{code:algorithm:exact} retrieves all the items with a dot product higher than or equal to $S_{min}$ of $q$ for each embedding $q_p$ (line 9-11), we have $x\in G'$. Thus, $x\in X'$. So we have shown that $X_K=X'$. 
\cut{
\changed{
\paragraph{Example}
\begin{table}[t]
    \centering
    \small
    \begin{tabular}{|c|c|c|c|c|c|}
    \hline
        item & $\langle f_1, g_1\rangle$ & $\langle  f_2, g_2 \rangle$ & $\pi_1$ & $\pi_2$ & $\phi$ \\ \hline
        a & 1 & 1 & 0.5 & 0.5 & 1.0 \\ \hline
        b & 0.8 & 0 & 0.5 & 0.5 & 0.4 \\ \hline
        c & 0 & 0.8 & 0.5 & 0.5 & 0.4 \\ \hline
        d & 0.7 & 0 & 1 & 0 & 0.7 \\ \hline
        e & 0.2 & 0.2 & 0.5 & 0.5 & 0.2 \\ \hline
    \end{tabular}
    \vspace{.4em}
    \caption{\changed{Example illustrating how exact- and approximate- top-$k$ algorithms work. We consider a fixed query $q$, and provide inner products $\langle f_p(x), g_p(q)\rangle$s, gating weights $\pi_p$s, and learned similarity scores $\phi$s for that query.}}
    \label{table:example}
    \vspace{-2em}
\end{table}
Table~\ref{table:example} shows an example of $4$ items with $2$ embedding sets. Assume the goal is to retrieve the top $2$ items. In the first stage, we retreive the top $2$ items for each embedding set, i.e., item $a, b, c$ are retrieved. For each of the retrieved items, we calculate the MoL scores based on their gating weights, i.e., $1.0, 0.4, 0.4$ for $a, b, c$ respectively. Here, $S_{min}$ is set to $0.4$. In the second stage, we retrieve all the items with $\langle f, g\rangle \geq 0.4$ for each embedding set, i.e., item $d$, and then calculate their corresponding MoL score, i.e., $0.7$. The algorithm returns $a$ and $d$ as the top $2$ items.
}
}

\vspace{-.4em}
\subsection{Approximate algorithms}
\label{sec:algorithm:approximate}
\changed{In the exact algorithm shown in Algorithm~\ref{code:algorithm:exact}, we need to retrieve all the items with a dot product higher than or equal to a threshold. When the threshold is a loose filter of the item set, which may happen when the dot products are skewed, $G'$ can be large, and the evaluation of \mol over a large number of candidates can be expensive. Here, we describe two heuristics to approximately retrieve the top $K$ items and analyze their gap against the exact algorithm.}

In both heuristics, we perform a two-stage retrieval as shown in Algorithm~\ref{code:algorithm:approximate}. In the first stage, we retrieve a set of $K'$ candidate items that are potentially high in \mol score by using dot products (line 2). Note that $K'$ can be larger than $K$, e.g., due to oversampling. In the second stage, we evaluate the \mol scores of the candidate items and return the top $K$ items (line 3).

\begin{algorithm}[t]
\small
    \caption{Approximate top-$k$ algorithms.}
    \label{code:algorithm:approximate}
    \begin{algorithmic}[1]
    \Require a query $q$, a set of items $X$
    \Ensure approximate top $K$ items
    \Function{ApproxTopK}{$q, X, K, K'$}
        \State $G \gets TopKCandidate(q, X, K')$ \Comment{Retrieve the top $K'$ candidates.} 
        \State \Return $BruteForceTopKMoL(q, G, K)$\Comment{Retrieve the top $K$ items \changed{from $G$} with \mol.}
    \EndFunction
    \Require a query $q$, a set of items $X$, $f_p(\cdot)$, $g_p(\cdot)$ for constructing the \changed{$P$} component-level embedding \changed{pairs} $f_p(q), g_p(x)$
    \Ensure union of top $K$ items over $P$ embedding pairs by dot product
    \Function{TopKPerEmbedding}{$q, X, K$}
        \State $G\gets \emptyset$
        \For{$p\in P$}
            \State $X_p \gets \{ g_p(x), x \in X \}$ \Comment{Can be preprocessed.}  
            \State $G\gets G\cup TopKDotProduct(f_p(q), X_p, K)$\Comment{Retrieve the top $K$ items by dot product \changed{for the $p$-th embedding pair.}}
        \EndFor
        \State \Return $G$ \Comment{\changed{Dedup'ed set of the top $K$ item for each of the $P$ queries}}
    \EndFunction
    \Require a query $q$, a set of items $X$, $f_p(\cdot)$, $g_p(\cdot)$ for constructing the component-level embeddings $f_p(q), g_p(x)$
    \Ensure top $K$ items with averaged dot product, $\sum_{p}\langle f_p(q), g_p(x) \rangle/P$ 
    \Function{TopKAvg}{$q, X, K$}
        \State $q'\gets \sum_{p=1}^P f_p(q)$
        \State $X' \gets \{\sum_{p=1}^P g_p(x)/P, x \in X\}$ \Comment{Can be preprocessed.} 
        \State \Return $TopKDotProduct(q', X', K)$
    \EndFunction
    \end{algorithmic}
\end{algorithm}

Here, we describe two heuristics to retrieve the candidate items: 

\paragraph{Top $K$ per embedding.} Given a query $q$ and a set of items $X$, for each \changed{of the $p$ embedding pairs}, retrieve top $K$ items $X_{K,p}$ based on dot product ($\langle f_p(q), g_p(x)\rangle$). Return the union across $P$ queries. 

The top $K$ per embedding heuristic returns the union of the top $K$ items for each embedding \changed{pair (group)} by dot product. We analyze the gap of this \changed{approach} as follows:
\begin{theorem}
\label{theorem:topkgap}
    \textbf{Upper bound of the gap of top $K$ per embedding:} Let $X_{K,p}$ be the top $K$ items of the embedding set $p$ and \changed{$S=\max\{\langle f_p(q), g_p(x) \rangle, x\in X_{K+1,p}\setminus X_{K, p}, \forall p\}$}. Let \changed{$S_K$} be the \nlprevision{$K^{th}$ largest} \mol score of the items in $\cup_p X_{K, p}$, then we have 
    \changed{$S_{\Delta}\leq S-S_K$}.
\end{theorem}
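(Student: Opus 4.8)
The plan is to reduce everything to a single structural property of MoL: because the gating weights satisfy $\pi_p(q,x)\ge 0$ and $\sum_{p=1}^{P}\pi_p(q,x)=1$, the score $\phi(q,x)$ is a convex combination of the per-embedding dot products, so that $\phi(q,x)\le \max_{1\le p\le P}\langle f_p(q),g_p(x)\rangle$ for every item $x$. I would state and prove this bound first as the workhorse lemma; it is the only place where the specific form of MoL enters, and the inequality is immediate from convexity.

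Next I would unwind the notation in the gap definition. Writing $X^*$ for the approximate result (the output of \textsc{TopKPerEmbedding} followed by \textsc{BruteForceTopKMoL}), the quantity $S_{min}$ --- the $K$-th largest MoL score over $\cup_p X_{K,p}$ --- is precisely $\min\{\phi(q,x):x\in X^*\}$, i.e. the $S$ appearing in the gap definition. Hence $S_{\Delta}=S'-S_{min}$ already, and the whole content of the theorem is to upper bound $S'=\max\{\phi(q,x):x\in X_K\setminus X^*\}$, the largest MoL score among the true top-$K$ items we fail to return, by the computable quantity $S$ built from the boundary items $X_{K+1,p}$.

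The core of the argument is a case split on a missed item $x^*\in X_K\setminus X^*$. If $x^*$ already lies in the candidate union $\cup_p X_{K,p}$, then since $X^*$ consists of the $K$ highest-MoL items of that union, $\phi(q,x^*)\le S_{min}\le S$, contributing a nonpositive amount to the gap. The substantive case is $x^*\notin\cup_p X_{K,p}$: then for \emph{every} embedding $p$ the item $x^*$ falls below rank $K$ in dot product, so $\langle f_p(q),g_p(x^*)\rangle$ is at most the dot product attained at the rank-$(K+1)$ boundary item of embedding $p$. Feeding this into the convexity lemma gives $\phi(q,x^*)\le \max_p\langle f_p(q),g_p(x^*)\rangle\le S$. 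Taking the maximum over all missed $x^*$ yields $S'\le S$, and therefore $S_{\Delta}=S'-S_{min}\le S-S_{min}$.

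I expect the main obstacle to be the boundary bookkeeping rather than any deep inequality: one must argue that an item outside the union is simultaneously dominated in dot product by the rank-$(K+1)$ item of \emph{each} embedding, and that the per-embedding maximum of these boundary values is exactly what the convex-combination bound produces. A secondary subtlety is notational --- reconciling the two roles played by $S$ and $S'$ in the statement, and confirming that the bounding quantity $S$ is defined from quantities (the boundary items $X_{K+1,p}$ and their per-embedding dot products) that are already available at query time, so that $S-S_{min}$ is an evaluable certificate of the gap. The convexity step itself requires no computation.
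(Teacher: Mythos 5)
The paper never actually proves this theorem --- it states it and follows it only with a tightness remark --- so there is no in-paper proof to compare against line by line. Your strategy is the natural one and is exactly the reasoning the paper uses elsewhere (in the correctness argument for the exact two-pass algorithm): since $\sum_p \pi_p(q,x)=1$ and $\pi_p(q,x)\ge 0$, the score $\phi(q,x)$ is a convex combination of the per-embedding dot products, hence $\phi(q,x)\le\max_p\langle f_p(q),g_p(x)\rangle$; combined with the observation that a missed item lying inside $\cup_p X_{K,p}$ already satisfies $\phi(q,x)\le S_{min}$, this is the whole argument. Your identification of $S_{min}$ with the quantity $S$ in the gap definition, and your reading of the theorem's $S'$ as a typo for $S$, are both the right repairs of a garbled statement.

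There is, however, one genuine gap in the final step. After the case split you bound a missed item $x^*\notin\cup_p X_{K,p}$ by $\phi(q,x^*)\le\max_p\langle f_p(q),g_p(x^*)\rangle\le\max_p\langle f_p(q),g_p(y_p)\rangle$, where $y_p$ is the rank-$(K{+}1)$ item of embedding $p$, and then assert that this last quantity is $\le S$. That holds only if $S$ is \emph{defined} as $\max_p\langle f_p(q),g_p(y_p)\rangle$, i.e., the largest boundary \emph{dot product}. The theorem instead defines $S=\max\{\phi(q,x):x\in X_{K+1,p}\}$, a maximum of \emph{MoL scores} over boundary items, and the inequality $\max_p\langle f_p(q),g_p(y_p)\rangle\le\max_{p}\max\{\phi(q,x):x\in X_{K+1,p}\}$ is false in general: a boundary item $y_p$ can have a large dot product under embedding $p$ yet a small $\phi(q,y_p)$ because its gating weight concentrates on another embedding where its dot product is low, so every boundary item's MoL score can sit strictly below the boundary dot products. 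Under the literal definition of $S$ the theorem can fail; under the dot-product definition (which is the one consistent with the paper's tightness remark, where $\pi_p=1$ forces $\phi$ to coincide with the boundary dot product) your proof goes through. You flag this reconciliation as a ``notational subtlety,'' but it is the substantive point: the bound must be stated in terms of $\max_p\langle f_p(q),g_p(y_p)\rangle$, not in terms of $\phi$ evaluated on $X_{K+1,p}$, and your writeup should make that replacement explicit rather than asserting the two coincide. A minor secondary point: when $\max_p\langle f_p(q),g_p(y_p)\rangle<S_{min}$ the claimed bound can be violated by a missed item inside the union, so the inequality should be stated as $S_\Delta\le\max\{S-S_{min},\,0\}$ or restricted to the nondegenerate case $S\ge S_{min}$.
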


\textit{Remark}
    This gap (\changed{error bound}) bounds the maximal difference between the \changed{$K^{th}$ largest MoL score from the set of items} retrieved by the heuristic and the actual \changed{$K^{th}$ largest MoL score}. \changed{In addition, any retrieved item $x$ with $\phi(q, x) \geq S$ belongs to the actual top $K$ items.} Note that there exists an \mol such that \changed{$S_{\Delta}=S-S_K$}, i.e., when $\pi_p(q, x)=1$ for \sloppy\changed{$x, p=\argmax_{x, p}\{\langle f_p(q), g_p(x)\rangle, x\in X_{K+1,p}\setminus X_{K, p}, \forall p\}$}. Thus, the upper bound of $S_{\Delta}$ is tight. \changed{In practice, we can calculate a looser bound with the items retrieved by per embedding top $K$, i.e., from $X_{K, p}$. We provide an example in Appendix~\ref{appendix:example}.}

\cut{
\changed{
\paragraph{Example}
    Consider the example shown in Table~\ref{table:example}. Assume we want to retrieve the top $2$ items. With per component embeddings, item $a, b, c$ are retrieved, and $S_k=0.4$. Since $S$ is calculated as $\max\{0.7, 0.2\}$, the upper bound of the gap is $0.3$. Here, the gap is exact, i.e., the MoL score of the actual top $2^{nd}$ item $d$ is $0.3$ higher than that of the top $2^{nd}$ item from the retrieved items $\{a, b, c\}$. If we bound the gap with the retrieved items only, i.e., $a, b, c$, then we will get a looser bound of $0.8-0.4=0.4$.
}
}

\paragraph{Top $K$ average.} Given a query $q$ and a set of items $X$, return the top $K$ items with the highest average dot product \changed{defined as} $\sum_{p}\langle f_p(q), g_p(x) \rangle/P$.

Note that the top $K$ average heuristic returns the exact top $K$ items when the gating weight distribution in \mol, $\pi$, is uniform. 

This heuristic is interesting for two reasons. First, the items retrieved by this heuristic are likely to be the top $K$ items of \mol when the weight distribution is more balanced. This complements the heuristic that retrieves top $K$ per embedding. Second, in the setup where the set of embedding pairs is constructed as the outer product of the embeddings of a query and those of an item (Equation~\ref{eq:mol-simplified}), the average dot product can be efficiently preprocessed and materialized for the items, and the computation of the top $K$ average is then \emph{agnostic} to the number of embedding pairs. 

Formally, let $P=P_q\times P_x$ be the number of embedding pairs, where $P_q$ is the number of embeddings of a query $q$ and $P_x$ is that of an item $x$. The average dot product can be computed as

\ifdefined\singlecolumnpdf
\begin{align}
     \frac{1}{P}\cdot \sum_{p=1}^P{\langle f_p(q), g_p(x)\rangle} = \frac{1}{P} \cdot \sum_{p_q=1}^{P_q}\sum_{p_x=1}^{P_x}{\langle f_{p_q}(q), g_{p_x}(x)\rangle} = \frac{1}{P} \cdot \left\langle \sum_{p_q=1}^{P_q} f_{p_q}(q), \sum_{p_x=1}^{P_x} g_{p_x}(x)\right\rangle
\end{align}
\else
\vspace{-.5em}
\begin{align}
     \frac{1}{P}\cdot \sum_{p=1}^P{\langle f_p(q), g_p(x)\rangle} &=& \frac{1}{P} \cdot \sum_{p_q=1}^{P_q}\sum_{p_x=1}^{P_x}{\langle f_{p_q}(q), g_{p_x}(x)\rangle} \\
                                                                  &=& \frac{1}{P} \cdot \left\langle \sum_{p_q=1}^{P_q} f_{p_q}(q), \sum_{p_x=1}^{P_x} g_{p_x}(x)\right\rangle
\end{align}
\vspace{-.5em}
\fi

Thus, we can preprocess the embeddings of the items and the query, so the number of embeddings accessed is $1$ per item for a given query, regardless of the overall number of component-level embeddings used by \mol, i.e., $P$. 

Finally, we can combine the candidates retrieved from top $K$ per embedding group and the top $K$ average as the following:
\paragraph{Combined top $K$.} Given a query $q$, a set of items $X$, and $K$, return the union of the items from the top $K$ per embedding group across the $P$ groups and the top $K$ items from the top $K$ average.

\begin{theorem}
    \textbf{Upper bound of the gap of combined top $K$.} Let $X_{K,p}$ be the top $K$ items of the embedding set $p$ and \changed{$S_K$} as defined in Theorem~\ref{theorem:topkgap}. Let $X'_K$ be the top $K$ items from top $K$ average. Let \changed{$S'=\max\{\langle f_p(q), g_p(x)\rangle, x\in X\setminus (X_{K, p}\cup X'_K), \forall p\}$}. Then the gap \changed{$S_{\Delta}$ satisfies $S_{\Delta}\leq S'-S_K$}.
\end{theorem}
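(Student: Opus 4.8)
The plan is to reduce the claim to two monotonicity facts: that the combined candidate pool only enlarges the per-embedding union, and that a genuine top-$K$ item surviving inside that pool is never discarded by the final \mol re-ranking. Recall that the combined heuristic forms the candidate set $C = \left(\bigcup_{p} X_{K,p}\right) \cup X'_K$ and then returns $X^{*} = \textit{BruteForceTopKMoL}(q, C, K)$, i.e. the $K$ items of $C$ with the largest $\phi(q,\cdot)$. By the Gap definition I must bound $S_{\Delta} = \tilde{S}' - S$, where $S = \min\{\phi(q,x) : x \in X^{*}\}$ and $\tilde{S}' = \max\{\phi(q,x) : x \in X_K \setminus X^{*}\}$. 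Note that this $\tilde{S}'$ (the best \emph{missed} true positive) is a priori different from the theorem's $S'$ (the best item \emph{outside} $C$), and the entire argument consists in relating the two.

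First I would lower-bound $S$ by $S_{min}$. Since $X^{*}$ is the \mol-top-$K$ of $C$, the value $S$ equals the $K$-th largest \mol score over $C$. Because $\bigcup_p X_{K,p} \subseteq C$, enlarging the pool can only raise the $K$-th largest order statistic, so $S \ge S_{min}$, where $S_{min}$ is precisely the $K$-th largest \mol score over $\bigcup_p X_{K,p}$ as carried over from Theorem~\ref{theorem:topkgap}. This yields $-S \le -S_{min}$.

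Next I would upper-bound $\tilde{S}'$ by the theorem's $S'$. The key step is the claim that every exact top-$K$ item which happens to lie in $C$ is retained in $X^{*}$: if $x \in X_K$ and $x \in C$, then $\phi(q,x)$ is among the $K$ largest \mol scores over all of $X$, hence a fortiori among the $K$ largest over the subset $C$, so $x \in X^{*}$. Consequently any missed true positive must lie outside the candidate pool, i.e. $X_K \setminus X^{*} \subseteq X \setminus C = X \setminus \left(\bigcup_p X_{K,p} \cup X'_K\right)$, and taking maxima over the smaller set on the left gives $\tilde{S}' \le S'$. Combining with the previous paragraph, $S_{\Delta} = \tilde{S}' - S \le S' - S_{min}$, as required.

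I expect the retention claim — that true positives landing in $C$ survive the final \mol re-ranking — to be the main obstacle, since it is exactly what converts a statement about the \emph{missed} true positives into one about items \emph{outside} the candidate pool; the remainder is just subset-monotonicity of the $K$-th order statistic. The one technical wrinkle to handle carefully is ties in \mol scores at the $K$-th boundary, which I would resolve by fixing a consistent tie-breaking rule so that ``top $K$'' is well-defined simultaneously for $X$ and for $C$; under that convention both the lower bound $S \ge S_{min}$ and the inclusion $X_K \setminus X^{*} \subseteq X \setminus C$ go through unchanged.
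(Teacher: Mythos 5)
Your proof is correct, and it supplies the argument the paper leaves implicit: the paper states this theorem with only a tightness remark and no written proof, but your two monotonicity steps --- $S \ge S_{min}$ because the combined candidate pool contains $\cup_p X_{K,p}$ so its $K$-th largest \mol score can only increase, and $X_K \setminus X^* \subseteq X \setminus \bigl(\cup_p X_{K,p} \cup X'_K\bigr)$ because any exact top-$K$ item that lands in the candidate pool survives the final \mol re-ranking --- are exactly the reasoning the theorem's form (and the analogous Theorem~\ref{theorem:topkgap}) presupposes. Your handling of the distinction between the best missed true positive and the best item outside the pool, and of ties at the $K$-th boundary, is the right level of care.
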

\textit{Remark} Similar to Theorem~\ref{theorem:topkgap}, the upper bound of this gap is tight. In practice, we can configure the $K$ to be different for the two heuristics, i.e., $K_1$ and $K_2$. For example, when the weight distribution $\pi$ is more balanced, $K_2$ can be configured to be larger as the top $K$ average approach approximates \mol well while being more computationally efficient.

\cut{
\label{sec:average_approx}
In empirical evaluation, the average of the dot product scores across all embedding sets is a reasonably good approximation of the learned similarity scores with MoL. This is because the gating weights are not very skewed, i.e., max gating weights is around $0.2 - 0.3$ and the second max gating weights is around $0.1 - 0.2$ for a set of $64$ embeddings. \textcolor{blue}{note that say 50\% of the 64 embeddings should still have gating weights of 0?}

\todo{Analyze how close the average score vs.the learned similarity score under different data distributions}

We can use a two-stage algorithm to scope the items that are potentially top $K$ items with learned similarity score with the average similarity scores:
\begin{itemize}
    \item Use dot product score to find the top $K^*$ items with the highest average dot product scores across all embeddings, where $K^* > K$. For example, $K^*$ can be $10\times K, 20\times K, 100\times K$, etc.
    \item Compute the learned similarity score of the top $K^*$ items and return the top $K$ items with the highest learned similarity score.
\end{itemize}
}

\cut{
\paragraph{Cost analysis and optimization}
With the general MoL problem specification, we need to load all the embedding sets of an item and the query to compute the average of dot product scores. Thus, the number of embeddings accessed is $P$. 

In the specific problem setup where the embedding set is constructed as the cross product of the query and item embeddings, we can preprocess the item embeddings and calculate the average dot product scores as follows:

Let $P=P_u\cdot P_x$ be the number of embedding sets, where $P_u$ is the number of embeddings for a user and $P_x$ is the number of embeddings of an item. The average dot product score is computed as
\begin{align}
    \frac{1}{P}\cdot \sum_{p=1}^P{u_p\cdot x_p} &= \frac{1}{P} \cdot \sum_{r=1}^{P_u}\sum_{s=1}^{P_x}{u_r\cdot x_s} 
    &= \frac{1}{P} \cdot \sum_{r=1}^{P_u} u_r\sum_{s=1}^{P_x}x_s 
    &= \frac{1}{P} \cdot u^* \cdot x^*
\end{align}

Here, $u^*$ is the sum of all the embeddings of the query and $x^*$ is the sum of all the embeddings of the item. Thus, we can preprocess the embeddings of the items and the query, so the number of embeddings accessed is $2$ per user and item pair, regardless of the number of embedding sets. 
}
\cut{
\subsection{Cover both skewed and non-skewed gating weights}
The basic idea is to comine both the top $K$ when the gating weights are very skewed and these when the gating weights are not skewed.
\begin{itemize}
    \item When gating weights are very skewed, the top $K$ dot product of each embedding set should cover the top $K$ items with learned similarity.
    \item When the gating weights are not very skewed, the average of each embedding set should cover the top $K$ items with learned similarity.
    \item If we can combine these cases together, we should have a good coverage of the top $K$ items regardless of the learned similarity.
    \item It is possible that we can find a balanced point to decide how many candidate items to get from the average dot product score and how many candidate items to get from the top $K$ dot product score so that the total cost is minimal to guarantee a given coverage of the degree of skewness of the gating weights.
\end{itemize}
}

\cut{
\subsection{Max approximation of the MoL}
\label{sec:max_approx}
In the empirical evaluation, we also found that taking the top $K$ items whose maximal dot product across the embedding sets are the highest is not only a good approximator by itself, it also complements the average approximation. Intuitively, the average approximation provides good top $K$ candidates for more balanced MoL gating weights, while the max approximation provides good top $K$ candidates for skewed MoL gating weights.

The max approximation of MoL works as the following:
\begin{itemize}
    \item For each item, calculate the maximal dot product across all the embeddings with the query. Thus, each item now associates with a single score, i.e., the \emph{max score}.
    \item Find the top $K$ item with the highest max score.
\end{itemize}

Again, we can over retrieve for max approximation, where the number of items retrieved is $10\times K, 20\times K$, etc.
}

\cut{
\subsection{Bound analysis}
\label{sec:bound_analysis}
Before describing the approximate algorithm, we first show the bound of inaccuracy of the remaining items and the retrieved items.

Let $I_p$ be the set of items retrieved by the top $K$ dot similarity scores for an embedding set $p$, $1\leq p\leq P$. Let $I=\cup_p I_p, 1\leq p\leq P$

\paragraph{Bound of inaccuracy with remaining items}

Given a set of candidate items retrieved by top $K$ of the dot similarity score:
\begin{itemize}
    \item Input: query $q$, a set of items $G$, $\pi$, $P$ sets of embeddings $q_p, x_p, 1\leq p\leq P$.
    \item Output: a bound $B$ for the inaccuracy of remaining items $x \notin G$. Let $S^K(q, Z)$ be the $K^{th}$ highest learned similarity scores for $x\in Z$. We have $S^K(q, G)-S^K(q, X)\leq B, X\subseteq G$.
    \item For each set of embeddings $p$, let $F_p(q)$ be the $K^{th}$ highest dot similarity score for user $u$. 
    \item Thus, the potential highest learned similarity score of the item $x\notin G$ will be $F^*(q) = \max(F_p(q)), 1\leq p\leq P$.
    \item Let $S^K(q, I)$ be the $K^{th}$ highest learned similarity score for the $x\in I$, then the bound of the inaccuracy is $B=F^*(q)-S^K(u, I)$ 
\end{itemize}

\paragraph{Bound of inaccuracy with retrieved items}
Given a set of candidate items retrieved by top $K$ of the dot similarity score:
\begin{itemize}
    \item Input: query $q$, a set of items $G$, $\pi$, $P$ sets of embeddings, $q_p, x_p, 1\leq p\leq P$.
    \item Output: a bound $B_l, B_q$ for the inaccuracy of the retrieved items $x\in G$, such that $B_l\leq s(q, x)\leq B_q$.
    \item For each set of embeddings $p$, let $F_p(q)$ be the $K^{th}$ highest dot similarity score for query $q$.
    \item Thus, the potential highest learned similarity score of the item $x \notin I_p$ will be $F_p(q)$.
    \item Because we do not have the gating weights for $\pi(q, x)$, we can only bound the learned similarity score of $x$ as $S(q, x) \leq \max_p{f'(u_p, x_p)}, 1\leq p\leq P$, where $f'(q_p, x_p)=f(q_p, x_p)$ if $x\in I_p$ otherwise $f'(q_p, x_p)=F_p(u)$.
    \item Thus, for $x\in I$, $B_l = \min_p{f'(q_p, x_p)}, B_q = \max_p{f'(q_p, x_p)}, 1\leq p\leq P$.
\end{itemize}
}

\cut{
\subsection{Heuristics}
\textcolor{blue}{Why is heuristics here? Should we not go to 5.8 directly}

\subsubsection{Per embedding set top K (individual top K)}
Given a value $k_0$, for each embedding set $f_p$, retrieve the top $k_0$ items $I_p$ with the highest dot product for $f_p$. Calculate the learned similarity score for the items in $\mathcal{I}=\cup_{p}I_p$ and return the top $K$ items with the highest learned similiarity score.

\subsubsection{All embedding sets top K (max top K)}
For each item, compute its max dot product over all the embedding set. Given a value $k_0$, return the top $k_0$ items $\mathcal{I}$ with the highest max dot product. Calculate the learned similarity score for the items in $\mathcal{I}$ and return the top $K$ items with the highest learned similarity score.

\subsubsection{Average of dot product top K (average top K)}
For each item, compute the average of its dot product over all the embedding set. Given a value $k_0$, return the top $k_0$ items $\mathcal{I}$ with the highest max dot product. Calculate the learned similarity score for the items in $\mathcal{I}$ and return the top $K$ items with the highest learned similarity score.

\subsubsection{Combined top K}
For each metric $i$ of the dot product, i.e., individual top K, max top K, average top K, retrieve the top $k_i$ items $I_i$. Calculate the learned similarity score for the items in $\mathcal{I}=\cup_{i}I_i$ and return the top $K$ items with the highest learned similiarity score.

Here, instead of calculating the learned similarity score for each metric, we first union the top $k_i$ items from each metric and then calculate the top $K$ w.r.t. the learned similarity.

Examples of the combined top K include individual top K + average top K and max top K + average top K.

Example configurations of $k_i$ include
\begin{itemize}
    \item Given $K$ for top $K$, over retrieve the items with $\lambda\cdot K$.
    \item Retrieve top $\lambda \cdot K$ from average top K.
    \item Retrieve $\lambda\dot K/P$ from each embedding set of individual top K, i.e., P is the number of embeddings. Then we combine the average top K with the individual top K.
    \item Retrieve $\lambda\dot K$ from max top K. Then we combine the average top K with the max top K.
\end{itemize}

We can also have a skewer ratio for $k_i$. For example, we can retrieve $\lambda \cdot K$ from average top K and $\lambda \cdot K/P$ from max top K.
}

\cut{
\subsubsection{Bound analysis}

\todo{Bound of individual top K and max top K based on the max of min top K values}

\todo{Show the bound is tight, i.e., there exists a MoL that can reach the bound}

\todo{Show the bound of max top K is higher than or equal to that of individual top K}

\todo{Bound of average top K combined with individual top K or max top K: Use linear programming}

\todo{Bound of combined top K given the threshold of MoL, i.e., the max value of gating weights is less than $\lambda$}

\subsubsection{Complexity analysis}
\todo{Complexity analysis of heuristics. The dot product is computed on-demand}

\todo{Optimization for cross product embedding sets by preprocessing for average top K and max top K}
}

\cut{
\subsection{Approximate algorithms}

\subsubsection{Naive Top $K$}
The naive top $K$ algorithm is a greedy based algorithm to retrieve the top $K$ items. The key idea is to leverage the dot product scores to scope the set of items for calculating the learned similarity scores. The algorithm works as the following:
\begin{itemize}
\item Retrieve the top $K$ for each set of embeddings by their dot similarity scores
\item Takes the union of the retrieved items as $I$
\item Retrieve the dot similarity scores of all the items in $I$ for each embedding set
\item Calculate the learned similarity score for all the items in $I$.
\item Returns the top $K$ items from $I$ with the highest learned similarity score.
\end{itemize}

\subsubsection{Incremental retrieval with bounded accuracy}
This algorithm leverages the bound analysis in Section~\ref{sec:bound_analysis} to incrementally calculate the learned similarity scores for candidate items until the accuracy of the top $K$ items is acceptable. This algorithm can be used to strike a balance between the quality of the retrieval and its cost.

Given an accuracy threshold $T$ and a cost budget $C$, the algorithm works as the following:
\begin{itemize}
    \item Retrieve the top $K$ for each set of embeddings by their dot similarity scores.
    \item Takes the union of the retrieved items as $I$.
    \item Use the \emph{bound of inaccuracy of retrieved items} to estimate the upper bound of the learned MoL similarity scores.
    \item Take the top $b$ items from with the highest upper bound, u770
\end{itemize}

\subsection{Complexity analysis}
\label{sec:complexity_analysis}

\todo{Try to do some complexity analysis assuming $\pi$ is skewed, i.e., $\pi_{ij}>\lambda$ for some $j$. If this is still not possible, perform the analysis assuming the similarity score is also skewed, i.e., Zipfian distribution}

\subsection{Enforce skewed distribution}
\label{sec:enforce_skew}

\todo{Enforce the skewned distribution of $\pi$ in the training}

\subsection{Physical optimization}
\todo{The key insight is computing gating weights for non-consecutive items is more expensive than consecutive items. So in order to compute items $1, 3, 5$, we can compute the gating weights of items $1, 2, 3, 4, 5$ at similar cost. However, because the gating weights depend on dot products, this also means we need to retrieve the dot product scores of item $2, 4$}
}
\section{Evaluation}
\label{sec:evaluation}

In this section, we evaluate the performance of \mol based learned similarity with the proposed load balancing loss, and the efficiency of retrieval algorithms discussed in Section~\ref{sec:algorithm}. \nlprevision{Our code and model checkpoints are available at \githuburl.} 

\cut{
In this section, we evaluate the accuracy and efficiency of our retrieval algorithm. In particular, we want to answer the following questions:

\begin{itemize}
    \item How accurate is our approximate algorithm compared to the exact algorithm and prior work?
    \item How efficient are our exact and approximate algorithms compared to prior work?
    \item What is the tradeoff between the accuracy and efficiency in our approximate algorithm?
    \item How realistic is our complexity analysis (Section~\ref{sec:complexity_analysis})? Given that our complex analysis relies on the distribution of $\pi$ and the learned similarity score, we would like to understand whether our assumptions on the skewness of the distribution hold in practice. \todo{where is complexity analysis?}
    \item What is the impact of enforcing skewness of the distribution of $\pi$ on the performance of the recommendation tasks and the accuracy and efficiency of our algorithms (Section~\ref{sec:enforce_skew})? \todo{do we want to drop this?}
\end{itemize}
}
\vspace{-.2em}
\subsection{Workloads}
\label{sec:evaluation:setup}


\nlprevision{We benchmark MoL with the proposed load balancing loss $\mathcal{L}_{MI}$, on top of state-of-the-art baselines in recommendation systems and question answering. We describe workloads used below.}

\paragraph{Recommendation Systems.} We consider three widely used datasets, the 1M and 20M subsets of MovieLens~\citep{movielens_2015}, and the largest Books subset of Amazon Reviews~\citep{amznreviews_sigir15}. \nlprevision{Sequential retrieval models have been shown to achieve state-of-the-art results on these datasets~\cite{gru4rec_iclr16,sasrec_icdm18,zhai2024actions_icml24}. In these settings, sequential encoders, like RNNs or Transformers, are used to map user representations at time $t$ -- e.g., in a commonly used setting shown in Figure~\ref{fig:mol-adaptation}, the list of items in user history up until time $t$, $\Phi_0, \ldots, \Phi_t$ -- to $\mathbb{R}^d$, and the model is trained to autoregressively predict the next item $x_{t+1}$.
We hence compare MoL with the proposed regularization loss on top of two popular backbones used for sequential retrieval models, SASRec~\citep{sasrec_icdm18} and HSTU~\citep{zhai2024actions_icml24}, against cosine similarity baselines. We utilize user id-based embeddings discussed in Section~\ref{sec:mol-adaptation} and MLPs to parameterize the $P_Q$ query-side and the $P_X$ item-side features.}

\paragraph{Question Answering (QA)} \nlprevision{Natural Questions (NQ)~\cite{naturalquestions_tacl19} is commonly used to evaluate state-of-the-art neural retrieval models, including dense retrieval~\citep{dpr_emnlp20,gtr_emnlp22} and generative retrieval~\cite{genre_iclr21,dsi_neurips22,nci_neurips22,genret_neurips23} approaches in recent years. The most commonly used version~\citep{dsi_neurips22,nci_neurips22,genret_neurips23}, which we reuse in our work, is often referred to as NQ320k. NQ320k consists of 320k
query-items pairs, where the items are from Wikipedia pages and the queries are natural language questions. 
We utilize special aggregation tokens discussed in Section~\ref{sec:mol-adaptation} to parameterize embeddings in MoL, and compare MoL with popular sparse retrieval methods~\cite{bm25_2009,dott5query_2019},
dense retrieval methods~\cite{dpr_emnlp20,sentencet5_acl2022,gtr_emnlp22}, and generative retrieval methods~\cite{genre_iclr21,dsi_neurips22,seal_neurips22,nci_neurips22,genret_neurips23}. Consistent with recent work~\cite{dsiqg_zhuang2023bridginggapindexingretrieval,nci_neurips22,genret_neurips23}, we use the pre-trained query generation model from DocT5Query~\cite{dott5query_2019} to generate synthetic (query, item) pairs for data augmentation.}

Table~\ref{table:exp:workload_stats} summarizes the statistics of these four workloads.
\begin{table}[t]
\begin{centering}
\small
\begin{tabular}{l|l|l|l|l|l}
\toprule
\multirow{1}{*}{Workload}                  & \multirow{1}{*}{$|Q|$} & \multirow{1}{*}{$|X|$} & \multirow{1}{*}{$|P_q|$} & \multirow{1}{*}{$|P_x|$} & $d_P$ \\ 
\midrule
\mlonem    & 6,040     & 3,649           & 8                        & 4                          & 64         \\   
\mltwentym & 138,493   & 24,186           & 8                        & 4                       & 128            \\ 
\amzn      & 694,897     & 674,044      & 8                        & 8                       & 32               \\
\nq     & 307,373     & 109,739      & 4                        & 4                       & 768 \\
\bottomrule
\end{tabular}
\vspace{.4em}
\caption{Workload statistics.}
\label{table:exp:workload_stats}
\vspace{-3em}
\end{centering}
\end{table}

\subsection{Quality of MoL-based Learned Similarity}

\paragraph{Metrics.} \nlprevision{We use Recall (Hit Rate) as the main metric. We report Hit Rate@\{1, 10, 100\} and Mean Reciprocal Rank (MRR) on NQ320K, following~\cite{nci_neurips22,genret_neurips23}, and Hit Rate@\{1, 10, 50, 200\} on \mlonem, \mltwentym, and \amzn, following~\cite{zhai23kdd,zhai2024actions_icml24}.}

\paragraph{Hyperparameter Settings.} We set the weight $\alpha$ for the proposed load balancing loss $\mathcal{L}_{MI}$ to $0.001$ for all experiments. We reuse baseline settings for most other hyperparameters, including learning rate, number of examples used for in-batch negative sampling, etc., with detailed discussions in Appendix~\ref{appendix:experiment-setups}. For the NQ320K dataset, we reuse SEAL~\cite{seal_neurips22} and NCI~\cite{nci_neurips22} results reported by~\cite{nci_neurips22}, and results for other models as reported by~\cite{genret_neurips23}. The Sentence-T5~\cite{sentencet5_acl2022}, GENRE~\cite{genre_iclr21}, DSI~\cite{dsi_neurips22}, SEAL~\cite{seal_neurips22}, DSI+QG~\cite{dsiqg_zhuang2023bridginggapindexingretrieval}, NCI~\cite{nci_neurips22}, and GenRet~\cite{genret_neurips23} rows are all finetuned from T5-base, consistent with MoL, to ensure a fair comparison.
All other results are implemented in PyTorch, and are trained with 1x/2x 48GB GPUs for the recommendation datasets and 4x 80GB GPUs for the QA datasets.

\begin{table}
\ifdef\singlecolumnpdf
\else
\small
\vspace{-.5em}
\begin{center}
  \begin{tabular}{llllll}
    \toprule

%
                          \multirow{2}{*}{\bf Method}  & \multicolumn{4}{c}{\multirow{1}{*}{\bf HR@K}}    & \multirow{2}{*}{\bf MRR}    \\
                                                       & K=1  & K=10  & K=50 & K=200 &  \\
    \midrule
    \multicolumn{6}{l}{\mlonem dataset} \\
SASRec~\cite{sasrec_icdm18}       & .0610     & .2818         & .5470        & .7540     & .1352     \\
                          SASRec + MoL & .0697     & .3036         & .5617        & .7667     & .1441     \\
                         HSTU~\cite{zhai2024actions_icml24}          & .0750     & .3332         & .5956        & .7824     & .1579     \\
                         HSTU + MoL   & \bf .0884 & \bf .3465     & \bf .6022    &  .7935 & \bf .1712  \\
                         HSTU + MoL abl. $\mathcal{L}_{MI}$  & .0847 & .3417  & .6011 & \bf .7942  & .1662  \\
    \midrule
%


    \multicolumn{6}{l}{\mltwentym dataset} \\
                            SASRec~\cite{sasrec_icdm18}       & .0653     & .2883     & .5484      & .7658          & .1375     \\
                            SASRec + MoL &  .0778 & .3102 & .5682 & .7779 & .1535 \\
                            HSTU~\cite{zhai2024actions_icml24}         & .0962     & .3557     & .6146      & .8080          & .1800      \\
                            HSTU + MoL   & \bf .1010 & \bf .3698 & \bf .6260  & \bf .8132      & \bf .1881   \\
                            HSTU + MoL abl. $\mathcal{L}_{MI}$  & .0994  & .3670 & .6241 & .8128 & .1866 \\
    \midrule
    \multicolumn{6}{l}{\amzn dataset} \\
                       SASRec~\cite{sasrec_icdm18}            & .0058      &  .0306     & .0754      & .1431     & .0153     \\
                       SASRec + MoL                           & .0095      & .0429      & .0915      & .1635     & .0212     \\
                       HSTU~\cite{zhai2024actions_icml24}     & .0101      & .0469      & .1066      & .1876     & .0233     \\
                       HSTU + MoL                             & \bf .0156  & \bf .0631  & \bf .1308  & \bf .2173 & \bf .0324 \\
                       HSTU + MoL abl. $\mathcal{L}_{MI}$     & .0153      & .0625      & .1286      & .2172     & .0321     \\
  \bottomrule
\end{tabular}
\vspace{0.4em}
\caption{\nlprevision{Evaluation of performance for sequential retrieval models on MovieLens and Amazon Reviews.}} 
\label{tbl:model-quality-reco}
\vspace{-2em}
\end{center}
\end{table}

\begin{table}
\small
\vspace{-.8em}
\begin{center}
  \begin{tabular}{llllll}
    \toprule
    \multirow{2}{*}{\bf Method}  & \multicolumn{3}{c}{\multirow{1}{*}{\bf HR@K}}    & \multirow{2}{*}{\bf MRR}    \\
      & K=1 &  K=10         & K=100      &     \\
    \midrule
    \multicolumn{5}{l}{\textit{Sparse retrieval}} \\
    BM25~\cite{bm25_2009}              & .297  & .603 & .821 & .402 \\
    DocT5Query~\cite{dott5query_2019}  & .380  & .693 & .861 & .489 \\
    \midrule
    \multicolumn{5}{l}{\textit{Dense retrieval}} \\
    DPR~\citep{dpr_emnlp20}               & .502 & .777 & .909 & .599 \\
    Sentence-T5~\cite{sentencet5_acl2022} & .536 & .830 & .938 & .641 \\
    GTR-Base~\cite{gtr_emnlp22}           & .560 & .844 & .937 & .662 \\
    \midrule
    \multicolumn{5}{l}{\textit{Generative retrieval}} \\
    GENRE~\cite{genre_iclr21} & .552 & .673 & .754 & .599 \\
    DSI~\cite{dsi_neurips22} & .552 & .674 & .780 & .596 \\
    SEAL~\cite{seal_neurips22} & .570 & .800 & .914 & .655 \\ 
                                      DSI+QG~\cite{dsiqg_zhuang2023bridginggapindexingretrieval} & .631 & .807 & .880 & .695 \\
                                      NCI~\cite{nci_neurips22}   & .659 & .852 & .924 & .731 \\ 
                                      GenRet~\cite{genret_neurips23}   & .681 & .888 & .952 & .759  \\ 
    \midrule
    \multicolumn{5}{l}{\textit{Learned similarities}} \\
    MoL       & \bf .685      &  \bf .919     & \bf .970  & \bf .773 \\
    MoL abl. $\mathcal{L}_{MI}$ & .673  & \bf .919 & .968  & .767 \\
  \bottomrule
\end{tabular}
\vspace{0.4em}
\caption{\nlprevision{Evaluation of performance for QA retrieval models finetuned from language models on Natural Questions.}} 
\label{tbl:model-quality-nq320k}
\vspace{-3em}
\end{center}
\end{table}


\paragraph{Results.} Across the six recommendation scenarios utilizing different sequential encoder backbones, Mixture-of-Logits (MoL rows) consistently outperform \changed{the state-of-the-art dense retrieval baselines (dot products)} by an average of 29.1\% in HR@1, 16.3\% in HR@10, and 18.1\% in MRR (Table~\ref{tbl:model-quality-reco}). On the widely used Natural Questions QA dataset, MoL outperforms all recent generative retrieval approaches as well as strong dense- and sparse- retrieval baselines (Table~\ref{tbl:model-quality-nq320k}). These results validate that learned similarities, in particular MoL, are not only theoretically expressive but also \emph{practically learnable}, improving retrieval quality across heterogeneous scenarios, including sequential retrieval models for Recommendations and finetuning LMs for Question Answering.

\paragraph{Ablation Studies.} We conduct ablation studies for the proposed mutual information-based load balancing loss relative to the best performing method for each dataset (``abl. $\mathcal{L}_{MI}$'' rows). Results show that our proposed $\mathcal{L}_{MI}$ loss improves HR@1 by 2.4\%, HR@10 by 0.8\% and MRR by 1.4\% across the four datasets. In particular, our proposed $\mathcal{L}_{MI}$ loss enables MoL to  outperform the best generative retrieval approach on NQ320K, GenRet~\cite{genret_neurips23}, across all metrics.





\subsection{Top $K$ retrieval performance}
\label{sec:evaluation:perf}

\begin{table*}[ht]
\small 
\vspace{-1em}
\begin{center}
  \begin{tabular}{clllllll}
    \toprule

                        & \bf Method  & \bf HR@1 & \bf HR@5 & \bf HR@10 & \bf HR@50 & \bf HR@100 &  \bf Latency / ms \\                        
    \midrule
    \cut{
\multirow{10}{*}{\mlonem} & \topkbf & 1.00  &	1.00  &	1.00  &	1.00  &	1.00  &	.67$\pm$.03 \\
                       & \topknaive{5}  & .838	& .771	& .683	& .499	& .437	& 1.02$\pm$.05 \\
                       & \topknaive{10}   & .932	& .908	& .844	& .678	& .618	& 1.03$\pm$.05\\
                       & \topknaive{50}   & \bf 1.00	& \bf 1.00	& \bf 1.00	& .978	& .954	& 1.03$\pm$.04\\
                       & \topknaive{100}   & \bf 1.00	& \bf 1.00	& \bf 1.00	& \bf .995	& \bf .990	& 1.30$\pm$.03\\
                       & \topkavg{200}   &  .980	& .974	& .970	& .958	& .948	& .82$\pm$.04\\
                       & \topkavg{500}   & \bf .992	& \bf .990	& \bf .991	& \bf .990	& \bf .990	& .80$\pm$.03\\
                       & \topkavg{1000}   &  \bf .996	& \bf .998	& \bf 1.00	& \bf .997	& \bf .999	& .79$\pm$.04\\
                       & \topkcomb{5}{200}  & \bf 1.00	& \bf 1.00	& \bf 1.00	& .997	& .999	& 1.18$\pm$.04 \\
    \midrule

}
\multirow{8}{*}{\mltwentym} & \topkbf & 1.00 & 1.00 & 1.00 & 1.00 & 1.00 & 2.73$\pm$0.01 \\

                       & \topknaive{5}  & 0.69          & 0.67          & 0.63          & 0.46          & 0.40          & 1.61$\pm$0.05 \\

                       & \topknaive{10}   & 0.95          & 0.88          & 0.82          & 0.65          & 0.57          & 1.65$\pm$0.06 \\

                       & \topknaive{50}   & \textbf{1.00} & \textbf{1.00} & \textbf{0.99} & 0.95          & 0.92          & 1.64$\pm$0.05 \\

                       & \topknaive{100}   &	\textbf{1.00} & \textbf{1.00} & \textbf{1.00} & \textbf{0.99} & 0.98          & 2.31$\pm$0.02 \\

                       & \topkavg{200}   &	\textbf{1.00} & \textbf{1.00} & \textbf{1.00} & \textbf{0.99} & 0.97          & 1.19$\pm$0.04 \\

                       & \topkavg{500}   &  \textbf{1.00} & \textbf{1.00} & \textbf{1.00} & \textbf{1.00} & \textbf{1.00} & 1.22$\pm$0.05 \\
                       & \topkcomb{5}{200}    & \textbf{1.00} & \textbf{1.00} & \textbf{1.00} & \textbf{1.00} & \textbf{1.00} & 1.86$\pm$0.06 \\
    \midrule

\multirow{16}{*}{\amzn} & \topkbf & 1.00 & 1.00 & 1.00 & 1.00 & 1.00 & 128.36$\pm$0.30 \\
                       & \topknaive{5}  & \textbf{1.00} & 0.92          & 0.90          & 0.61          & 0.48          & 20.47$\pm$0.07 \\
                       & \topknaive{50}    & \textbf{1.00} & \textbf{1.00} & \textbf{1.00} & 0.98          & 0.93          & 21.60$\pm$0.08 \\
                       & \topknaive{100}  & \textbf{1.00} & \textbf{1.00} & \textbf{1.00} & \textbf{0.99} & 0.97          & 23.32$\pm$0.08 \\
                       & \topknaive{200}  & \textbf{1.00} & \textbf{1.00} & \textbf{1.00} & \textbf{1.00} & \textbf{1.00} & 26.55$\pm$0.12 \\
                       & \topkavg{200}   	& 0.97 & 0.92 & 0.91 & 0.76 & 0.67          & 1.13$\pm$0.04 \\
                       & \topkavg{500}   	& 0.97 & 0.98 & 0.97 & 0.86 & 0.81 & 1.17$\pm$0.04 \\
                       & \topkavg{1000}   	& \textbf{0.99} & 0.98 & \textbf{1.00} & 0.92 & 0.88 & 1.12$\pm$0.05 \\
                       & \topkavg{2000}  	& \textbf{1.00} & \textbf{0.99} & \textbf{1.00} & 0.95          & 0.92          & 1.20$\pm$0.02 \\
                       & \topkavg{4000}  	& \textbf{1.00} & \textbf{1.00} & \textbf{1.00} & 0.96          & 0.95          & 2.05$\pm$0.01 \\
                       & \topkavg{8000}       & \textbf{1.00} & \textbf{1.00} & \textbf{1.00} & 0.97          & 0.97          & 3.79$\pm$0.01 \\
                       & \topkcomb{5}{200}  	& \textbf{1.00} & \textbf{1.00} & \textbf{1.00} & 0.96          & 0.95          & 20.75$\pm$0.07 \\
                       & \topkcomb{50}{500}   	& \textbf{1.00} & \textbf{1.00} & \textbf{1.00} & \textbf{0.99} & 0.96          & 22.12$\pm$0.07 \\
                       & \topkcomb{100}{1000}   & \textbf{1.00} & \textbf{1.00} & \textbf{1.00} & \textbf{0.99} & 0.98          & 24.02$\pm$0.13 \\
                       & \topkcomb{200}{2000} & \textbf{1.00} & \textbf{1.00} & \textbf{1.00} & \textbf{1.00} & \textbf{1.00} & 28.01$\pm$0.11 \\
    \midrule

\multirow{6}{*}{\nq} & \topkbf & 1.00	& 1.00	& 1.00	& 1.00	& 1.00	& 37.74$\pm$.47\\
                       & \topknaive{5}  & \bf 1.00	& \bf 1.00	& \bf 1.00	& 0.96	& \bf 1.00	& 4.71$\pm$0.08\\
                       & \topknaive{10}   & \bf 1.00	& \bf 1.00	& \bf 1.00	& 0.98	& \bf 1.00	& 4.83$\pm$0.08\\
                       & \topknaive{50}   & \bf 1.00	& \bf 1.00	& \bf 1.00	& \bf 1.00	& \bf 1.00	& 6.31$\pm$0.09\\
                       & \topkavg{100}   &	\bf 1.00	& \bf 1.00	& \bf 1.00	& \bf 1.00	& \bf 1.00	& 0.57$\pm$0.05\\
                       & \topkcomb{5}{100}    & \bf 1.00	& \bf 1.00	& \bf 1.00	& \bf 1.00	& \bf 1.00	& 5.28$\pm$0.08 \\


  \bottomrule
\end{tabular}
\vspace{0.3em}
\caption{\nlprevision{Evaluation of top $K$ retrieval performance, with hit rate (HR) normalized by the brute-force top $K$ method and latency 
measured over 
a batch of queries (where the batch size is 32). (Relative) hit rate higher than .99 is marked in \textbf{bold}.}}
\label{tbl:exp:topk}
\vspace{-2em}
\end{center}
\end{table*}

We evaluate the following methods for top $K$ retrieval performance:
\begin{itemize}[leftmargin=*]
    \item Brute-force top $K$ (\topkbf): Evaluate the \mol scores for all items and return the top $K$ items. This is the ground truth in our top $K$ evaluation~\footnote{We omit the baseline with the two-pass exact algorithm (Section~\ref{sec:algorithm:exact}) because the range-based item retrieval can still be expensive when the range threshold is loose. Empirically, the brute-force top $K$ is more efficient on our datasets. We leave the efficient implementation of the two-pass exact algorithm as future work.}.
    \item Per embedding top $K$ (\topknaive{($N$)}): This algorithm is described in Section~\ref{sec:algorithm:approximate}. $N$ is the number of candidate items retrieved from each embedding set, where $N\times P\geq K$.
    \item Average top $K$ (\topkavg{($N$)}): This algorithm is described in Section~\ref{sec:algorithm:approximate}. $N$ is the number of the candidate items retrieved by average dot products, where $N\geq K$.
    \item Combined top $K$ from per embedding top $K$ and average top $K$ (\topkcomb{$N_1$}{$N_2$}): This is described in Section~\ref{sec:algorithm:approximate}. $N_1$ is the number of candidate items retrieved from per embedding top $K$ and $N_2$ is the number of candidate items retrieved from average top $K$, where $N_1\times P+N_2\geq K$.
\end{itemize}

\nlprevision{For each dataset, we evaluate top $K$ retrieval methods based on the best performing model configurations reported in Table~\ref{tbl:model-quality-reco} and Table~\ref{tbl:model-quality-nq320k}.} Table~\ref{tbl:exp:topk} shows the hit rate (HR) and latency of all the methods. The hit rate is normalized by the ground truth, i.e., the hit rate achieved with brute-force top $K$. We measure latency by evaluating a batch of 32 retrieval queries, in order to achieve high accelerator utilization; this is consistent with prior work on GPU/TPU-based retrieval algorithms~\citep{faiss_tbd21,tpuknn_goog_neurips22,zhai23kdd}.
\nlprevision{We omit \mlonem as its size is small (Table~\ref{table:exp:workload_stats}).} 
We set batch size to $32$ for all datasets.
We perform evaluation on a single RTX 6000 Ada GPU. We report latency averaged over 20 warm runs\cut{ for 10\% of the batches}.

We observe that our approximate heuristics achieve high HR with overfetching. For example, \topkavg{500} achieves $>.99$ in relative HR across the board \nlprevision{for 
\mltwentym}, and \nlprevision{\topkavg{100} achieves $>.99$ in relative HR across the board for \nq}. In addition, the combined top $K$ algorithm can outperform both \topknaive{} and \topkavg{} of the corresponding configurations, sometimes significantly, e.g., \topkcomb{5}{200} vs. \topknaive{5} and \topkavg{200} on \amzn. This indicates that the set of candidate items retrieved by each individual approximate algorithm indeed complements each other when the weight distributions, \nlprevision{$\pi_p(q, x)$s,} vary in \mol.

In terms of efficiency, we observe that our approximate heuristics are significantly lower in latency than the exact baselines, especially as the number of items in the dataset becomes large. For example, compared to \topkbf{}, \topkavg{} achieves $>.99$ relative HR@100 with a speedup of \changed{\nlprevision{\nqmaxspeedup in latency for \nq}}. While the algorithm latency grows with the size of the dataset in the brute-force baseline, it grows much slower with the approximate methods. For example, the algorithm latency increases by \nlprevision{\changed{$47.0\times$} from \mltwentym to \amzn in \topkbf{}, while the growth rate is \changed{$10.1\times$} and $1.0\times$ for \topknaive{100} and \topkavg{500},} respectively. Thus, we expect the speedup of the approximate methods to become even more pronounced with larger datasets. 
\cut{
We also notice that \topkavg{} tends to be more efficient than \topknaive{} with comparable HR, e.g., \changed{\topkavg{4000} vs. \topknaive{50} on \amzn with \nlprevision{$10.5\times$}} speedup in latency. 
This is mainly due to two reasons. First, when the HR is comparable, the maximal number of candidate items from \topknaive{} is larger than that of \topkavg{}. Second, compared to \topknaive{}, the computation of \topkavg{} is agnostic to the number of component-level 
embeddings, $P$, because of the materialization optimization described in Section~\ref{sec:algorithm:approximate}. Interestingly, we also see that the combined top $K$ is more efficient than the summation of the latency of its individual components, e.g., \topkcomb{5}{200} is $1.5\times$ faster than the sum of the latency from \topknaive{5} and \topkavg{200} on \mltwentym. This is because our implementation reduces the overhead of the combined method by \nlprevision{consolidating processing shared by the two components}.
}

We also notice that \topkavg{} tends to be more efficient than \topknaive{} at comparable HRs, e.g., \changed{\topkavg{4000} is \changed{$10.5\times$}} faster than \topknaive{50} on \amzn in terms of latency. 
First, the computation in \topkavg{} is agnostic to the number of component-level 
embeddings, $P$, because of the preprocessing described in Section~\ref{sec:algorithm:approximate}. Second, the branching and deduplication steps in \topknaive{} cannot leverage the parallel processing capabilities of GPUs effectively. Additionally, the latency of the combined top $K$ algorithm is lower than the sum of its individual components' latencies; e.g., on \mltwentym, \topkcomb{5}{200} has a latency $1.5\times$ lower than the sum of the individual latencies of \topknaive{5} and \topkavg{200}. \changed{This is done by 
consolidating processing shared by the two components}.

Overall, empirically \topkavg{} strikes a good balance between high HR and low latency, and the combined top $K$ algorithm can be used if the target HR is extremely stringent.

\cut{
\todo{Add the number of items retrieved and costed by \mol in Appendix}
}
\cut{
\todo{Evaluation TODOs (Jiaqi)}
Three base algorithms
\begin{itemize}
    \item Brute force Top K ($BfTopK(K)$): Find the top $K$ items from each embedding set. If there are $P$ embedding sets, the total number of items retrieved (without deduplication) is $P\times K$
    \item Two pass exact Top K ($2pTopK(K)$: Use the two pass algorithm to find the exact top $K$.
    \item Average top K ($AvgK(K)$): Find the top $K$ items with the largest dot product averaged over all embedding sets.
    \item Max top K ($MaxK(K)$): Find the top $K$ items with the largest max dot product over all embedding sets.
\end{itemize}

The combined algorithm
\begin{itemize}
    \item Combine $Alg_1(K_1)$ and $Alg_2(K_2)$ ($Comb(Alg_1(K_1), Alg_2(K_2))$): Take the item set $I_1$ from the base algorithm $Alg_1(K_1)$ and the item set $I_2$ from the base algorithm $Alg_2(K_2)$. Then take the union of $I_1$ and $I_2$, i.e., $I=I_1\cup I_2$
    \item Example: $Comb(TopK(K), AvgK(K))$: Take the items from $TopK(K)$, i.e., $P\times K$ items (without deduplication) and the items from $AvgK(K)$, i.e., $K$ items. 
    \item Example: $Comb(AvgK(K), MaxK(K))$: Take the items from $AvgK(K)$, i.e., $K$ items, and the items from $MaxK(K)$, i.e., $K$ items.
\end{itemize}

Algorithm and parameters evaluated for retrieving the top $K$ items of learned similarity
\begin{itemize}
    \item $K$: The final $K$ items returned by MoL, $K\in \{1, 5, 10, 50, 100\}$
    \item $t$: Multiplier of $K$, $t\in \{1, 5, 10, 50, 100, 200, 300, 400, 500\}$
    \item $TopK(\max(1, \frac{t\times K}{P}))$
    \item $AvgK(t\times K)$
    \item $MaxK(t\times K)$
    \item $Comb(AvgK(\max(1, \frac{t\times K}{2})), TopK(\max(1, \frac{t\times K}{2P})))$, i.e., divide the budget evenly between $TopK$ and $AvgK$.
    \item $Comb(AvgK(\max(1, \frac{t\times K}{2})), MaxK(\max(1, \frac{t\times K}{2}))$, i.e., divide the budget evenly between $AvgK$ and $MaxK$
    \item $Comb(AvgK(t\times K), TopK(\max(1, \frac{t\times K}{20P})))$, i.e., add $1/10$ budget for $TopK$.
    \item $Comb(AvgK(t\times K), MaxK(\max(1, \frac{t\times K}{10}))$, i.e., add $1/10$ budget for $MaxK$.    
\end{itemize}
We compare our technique with two baselines:
\begin{itemize}
    \item \textbf{Union of top $K$ retrieval}: The algorithm retrieves the top $K$ items of each of the $p$ embedding sets based on dot product similarity score. Then it invokes the learned similarity function on the retrieved items to find the final top $K$ items based on the learned similarity function. \textcolor{red}{jz: isn't this exactly the same as the following if we sparsify $\pi$?}
    \item \textbf{Hierarchical retrieval}: ~\cite{zhai23kdd} proposes a hierarchical retrieval algorithm for mixture of logits. \todo{Description of the algorithm}. While this algorithm does not provide any accuracy guarantee, ~\cite{zhai23kdd} shows that the mixture of logits model with hierarchical retrieval outperforms state-of-the-art models in retrieval tasks.
\end{itemize}   

\subsection{Workload analysis}
\todo{Show the distribution of gating weights and dot product scores}
\todo{Figure~\ref{fig:exp:mol_weights}: Distribution of MoL gating weights}

\begin{figure*}[t]
    \caption{\todo{Distribution of \mol gating weights}}
    \label{fig:exp:mol_weights}
    \centering
    \begin{subfigure}[t]{0.32\textwidth}
        \centering
        \includegraphics[width=.95\linewidth]{figure/placeholder_5x8.png}
        \caption{Workload \mlonem}
        \label{fig:exp:mol_weights:ml1m}
    \end{subfigure}%
    \begin{subfigure}[t]{0.32\textwidth}
        \centering
        \includegraphics[width=.95\linewidth]{figure/placeholder_5x8.png}
        \caption{Workload \mltwentym}
        \label{fig:exp:mol_weights:ml20m}
    \end{subfigure}%
    \begin{subfigure}[t]{0.32\textwidth}
        \centering
        \includegraphics[width=.95\linewidth]{figure/placeholder_5x8.png}
        \caption{Workload \amzn}
        \label{fig:exp:mol_weights:amzn}
    \end{subfigure}%
\end{figure*}

\todo{Figure~\ref{fig:exp:cosine_scores}: Distribution of Cosine scores}

\begin{figure*}[t]
    \caption{\todo{Distribution of Cosine similarity scores}}
    \label{fig:exp:cosine_scores}
    \centering
    \begin{subfigure}[t]{0.32\textwidth}
        \centering
        \includegraphics[width=.95\linewidth]{figure/placeholder_5x8.png}
        \caption{Workload \mlonem}
        \label{fig:exp:cosine_scores:ml1m}
    \end{subfigure}%
    \begin{subfigure}[t]{0.32\textwidth}
        \centering
        \includegraphics[width=.95\linewidth]{figure/placeholder_5x8.png}
        \caption{Workload \mltwentym}
        \label{fig:exp:cosine_scores:ml20m}
    \end{subfigure}%
    \begin{subfigure}[t]{0.32\textwidth}
        \centering
        \includegraphics[width=.95\linewidth]{figure/placeholder_5x8.png}
        \caption{Workload \amzn}
        \label{fig:exp:cosine_scores:amzn}
    \end{subfigure}%
\end{figure*}

\subsection{Accuracy and efficiency}

\todo{Accuracy vs. Efficiency for exact algorithm, approximate algorithm, and the baselines. Note that the accuracy refers to the accuracy in terms of the learned similarity score function. It does not mean the end-to-end performance of the recommendation retrieval tasks.}

\input{table_recall_latency}

\todo{Efficiency analysis}

\begin{table*}[t]
\centering
\caption{\todo{Percentage of items scored by MoL for \amzn}}
\label{table:exp:mol_ratio}
\centering
\begin{tabular}{|l|lll|lll|lll|lll|lll|}
\hline
k   & \multicolumn{3}{c|}{1} & \multicolumn{3}{c|}{5} & \multicolumn{3}{c|}{10} & \multicolumn{3}{c|}{50} & \multicolumn{3}{c|}{100} \\
\hline
multiplier & 1 & 5 & 10 & 1 & 5 & 10 & 1 & 5 & 10 & 1 & 5 & 10 & 1 & 5 & 10    \\
\hline
\topkunion          &  0.00 & 0.00  & 0.00 &  0.00 & 0.00   &  0.00 &  0.00 & 0.00  &  0.00 &  0.00 & 0.00  &  0.00 &  0.00 & 0.00  &  0.00\\
\hline
\topkmax          &  0.00 & 0.00  & 0.00 &  0.00 & 0.00   &  0.00 &  0.00 & 0.00  &  0.00 &  0.00 & 0.00  &  0.00 &  0.00 & 0.00  &  0.00\\
\hline
\topkavg          &  0.00 & 0.00  & 0.00 &  0.00 & 0.00   &  0.00 &  0.00 & 0.00  &  0.00 &  0.00 & 0.00  &  0.00 &  0.00 & 0.00  &  0.00\\
\hline
\topkavgunion          &  0.00 & 0.00  & 0.00 &  0.00 & 0.00   &  0.00 &  0.00 & 0.00  &  0.00 &  0.00 & 0.00  &  0.00 &  0.00 & 0.00  &  0.00\\
\hline
\topkavgmax          &  0.00 & 0.00  & 0.00 &  0.00 & 0.00   &  0.00 &  0.00 & 0.00  &  0.00 &  0.00 & 0.00  &  0.00 &  0.00 & 0.00  &  0.00\\
\hline
\end{tabular}
\end{table*}

\begin{figure*}[t]
    \caption{\todo{Time breakdown of top $K$ algorithms on \amzn}}
    \label{fig:exp:timebreakdown}
    \centering
    \begin{subfigure}[t]{0.32\textwidth}
        \centering
        \includegraphics[width=.95\linewidth]{figure/placeholder_5x8.png}
        \caption{$multiplier = 1$}
        \label{fig:exp:timebreakdown:m1}
    \end{subfigure}%
    \begin{subfigure}[t]{0.32\textwidth}
        \centering
        \includegraphics[width=.95\linewidth]{figure/placeholder_5x8.png}
        \caption{$multiplier = 5$}
        \label{fig:exp:timebreakdown:m5}
    \end{subfigure}%
    \begin{subfigure}[t]{0.32\textwidth}
        \centering
        \includegraphics[width=.95\linewidth]{figure/placeholder_5x8.png}
        \caption{$multiplier = 10$}
        \label{fig:exp:timebreakdown:m10}
    \end{subfigure}%
\end{figure*}

\subsection{Other observations}
}

\cut{
\subsection{Skewness analysis}

\todo{Distribution of $\pi$ in practice} 

\todo{Complexity analysis vs. the real processing time, i.e., the growth factor}
}

\cut{
\subsection{Impact of enforcing skewness}

\todo{Impact of enforcing skewness of the distribution on the algorithm and the recommendation tasks. Ideally, we want to show skewed distribution has no impact on the performance of the recommendation tasks.}
}

\vspace{-.2em}
\section{Related work}
\label{sec:related_work}
\vspace{-.1em}

\paragraph{Similarity Functions in Retrieval.} Most information retrieval models in \nlprevision{recommendation systems and natural language processing (e.g., question answering)} follow a classical two-stage paradigm~\citep{ytdnn_goog_recsys16,dpr_emnlp20}, where up to billions of items~\citep{pixie_pins_www18,mind_baba_cikm19,zhai23kdd,borisyuk2024linr_cikm24} are first filtered down to hundreds in the \textit{retrieval} stage, followed by \nlprevision{another stage (e.g., ranking in recommendation systems or generation in RAG~\cite{rag_neurips20})} that produces the final results. Earlier work on large-scale neural retrieval models primarily utilize dual-encoder (dense retrieval, etc.) setups, with dot products as the similarity function~\citep{ytdnn_goog_recsys16,dpr_emnlp20,sentencet5_acl2022,gtr_emnlp22}. Researchers quickly realized that dot products limited retrieval stage's performance, and explored various learned similarity-based approaches. Prominent variants include maximum similarity based on multiple embeddings~\citep{mind_baba_cikm19,colbert_sigir20,colbertv2_naacl22}, specialized neural networks, often leveraging Hadamard products~\citep{ncf_www17,nnm_wsdm20,flashlight_log22,borisyuk2024linr_cikm24}, and representing item ids as token sequences (``learned index structures''), either implicitly defined during tree traversal~\citep{plt_icml16,tdm_kdd18,otm_icml20} or explicitly in the ``generative retrieval'' setups~\citep{genre_iclr21,dsi_neurips22,nci_neurips22,seal_neurips22,dsiqg_zhuang2023bridginggapindexingretrieval,genret_neurips23}. It has been shown, however, that learned neural distances often fail to outperform dot products, e.g., Hadamard MLPs in recommendation systems~\citep{ncf_mf_goog_recsys20} and DSI for QA scenarios in NLP~\citep{genret_neurips23}. Learned index structures further introduce stability and latency challenges as both NLP and recommendation systems need to support billion-scale realtime updated set of items~\citep{pixie_pins_www18,zhai23kdd,borisyuk2024linr_cikm24}.  Despite these challenges, significant gains (17\% gains at Hit Rate@100~\citep{zhai23kdd} to 24\% gains at Hit Rate@400~\citep{borisyuk2024linr_cikm24}) with learned similarities have been reported in recent years; these can be attributed to careful construction of learned similarity functions~\citep{colbertv2_naacl22,zhai23kdd}, implicit diversification done as part of beam search~\citep{dr_cikm21}, explicit incorporation of side-information using special neural architectures~\citep{zhai23kdd, borisyuk2024linr_cikm24}, and hardware-aware similarity function and inference algorithm design on GPUs~\citep{tpuknn_goog_neurips22,zhai23kdd,cagra_icde24, borisyuk2024linr_cikm24}.


\paragraph{Load Balancing for Conditional Computations in Neural Networks} \nlprevision{Conditional computations have been widely utilized in deep learning models~\cite{bengio2016conditionalcomputationneuralnetworks,smoe_iclr17,modsquad_cvpr23}. Regularization losses have been proposed based on the observation that an ideal policy should evenly utilize all compute units in aggregate while being sparse at an individual example level~\cite{bengio2016conditionalcomputationneuralnetworks}. Mixture-of-experts, a common way to implement conditional computations, has been widely used in language and vision domains~\cite{modsquad_cvpr23,smoe_iclr17} where mutual information-based regularization losses between experts and tasks~\cite{modsquad_cvpr23} and experts and tokens~\cite{moduleformermoe_2023} have been shown to help with various architectures.}

\paragraph{Efficient Nearest Neighbor Search (NNS)} NNS has been a popular research topic due to their critical role in large-scale retrieval and vector databases. Most studies focus on the dot product case, also known as Maximum Inner Product Search (MIPS). Various techniques were proposed and analyzed, including tree structures~\citep{kdtree_1975,conetree_mips_kdd12}, locality sensitive hashing~\citep{lsh_vldb99,alsh_ping_neurips2014}, production quantization~\citep{pq_nns_pami11,quant_mips_kumar_aistats16}, data partitioning~\citep{clusteringss_tkde02,hdss_sigmod11},
graph-based methods~\citep{hnsw_tpami18,diskann_neurips19}, and so on. The general case for NNS utilizing learned similarities remains less studied; for learned index structures, techniques to construct trees 
have been proposed to ensure beam search result in globally optimal top-$K$ results~\citep{otm_icml20}. Algorithms based on implicit~\citep{hnsw_tpami18,diskann_neurips19,flashlight_log22,cagra_icde24} or explicit graphs~\citep{flashlight_log22} have been proposed to obtain a tractable candidate set in multi-stage retrieval setups; however, such approaches' performance can degrade when the similarity function is not a metric, and constructing appropriate graph indices for non-metric similarity functions can remain challenging even for the inner product case~\citep{yandex2018neurips}. Due to GPUs and other accelerators having orders of magnitude higher arithmetic intensity vs CPUs, traditional quantization techniques~\citep{alsh_ping_neurips2014,quant_mips_kumar_aistats16} no longer fully utilize the compute; accelerator-specific nearest neighbor
algorithms that benefit from increased compute have been proposed recently~\citep{tpuknn_goog_neurips22,zhai23kdd,cagra_icde24,borisyuk2024linr_cikm24}.

\cut{
\todo{Bailu: add some vectordb references?}
}

\cut{
Multi-vector vector search
\begin{itemize}
    \item For multi-modal vector search.
    \item They assume the weights are fixed and given.
    \item The query can request the search based on a subset of the vectors.
    \item The number of embedding sets is small, i.e., 2~\cite{vbase23osdi, milvus21sigmod} or up to 4~\cite{wang2023must}.
\end{itemize}
}

\vspace{-.2em}
\section{Conclusion}
We have analyzed techniques for efficient retrieval with expressive learned similarities in this work. We begin by showing Mixture-of-Logits (\mol) is a universal approximator of \changed{all} similarity functions, and further empirically learnable -- \mol  with our proposed load balancing loss consistently outperforms dot products (dense retrieval),
sparse retrieval, and generative retrieval approaches across
Recommendation Systems and Question Answering scenarios, setting new state-of-the-art across common, heterogeneous benchmark datasets. We next propose exact and approximate algorithms to enable efficient retrieval using learned similarity functions, and show their correctness and \changed{error} bounds. Our approximate top $K$ algorithms can reach \changed{$>.99$} of Hit Rate relative to exact algorithms, while achieving up to \maxspeedup reduction in end-to-end latency and with minimal indexing overheads. We expect the speedups to be further amplified with larger-scale datasets and GPU kernel optimizations.  
Given MoL's \changed{impressive empirical} performance gains of 20\%-30\% across Hit Rate@50-400 over hundreds of millions to billions of items~\citep{zhai23kdd,borisyuk2024linr_cikm24} and broad applicability across heterogeneous scenarios, our work provides strong theoretical and practical justifications for migrating web-scale vector databases away from dense retrieval and MIPS to Retriev\underline{a}l w\underline{i}th Learned Similarities (RAILS) on GPUs.



\cut{
\section*{Limitations}

While we have performed evaluations on top of one of the largest recommendation datasets publicly available with 674,044 items, we expect potentially different algorithmic behaviors as we scale our approximate top-$K$ algorithms to handle hundreds to millions to billions of items. Additional low-level GPU kernel optimizations, one of which being an efficient grouped top-$K$ query kernel that fetches top-$K$ for multiple query embeddings (e.g., $f_1(q), \ldots, f_{P_q}(q)$), could also change the relative efficiency of approximate methods we discussed, in particular \topkavg{($N$)} and \topkcomb{$N_1$}{$N_2$}. We also leave more efficient implementations of the two-pass exact algorithm as future work.
}

\cut{
\paragraph{Future work}

While we have performed evaluations on top of one of the largest recommendation datasets publicly available with 674,044 items, we expect potentially different algorithmic behaviors as we scale our approximate top-$K$ algorithms to handle hundreds to millions to billions of items. Additional low-level GPU kernel optimizations, one of which being an efficient grouped top-$K$ query kernel that fetches top-$K$ for multiple query embeddings (e.g., $f_1(q), \ldots, f_{P_q}(q)$), could also change the relative efficiency of approximate methods we discussed, in particular \topkavg{($N$)} and \topkcomb{$N_1$}{$N_2$}. We also leave more efficient implementations of the two-pass exact algorithm as future work.
}


\bibliographystyle{ACM-Reference-Format}
\balance
\bibliography{references}


\appendix
\section{Experiment Setups}
\label{appendix:experiment-setups}

\subsection{Reproducibility}

The implementations and hyperparameter settings for reproducing our experiment results can be found at \githuburl. We discuss specific details below.

\subsection{Parameterization of low-rank (``component-level'') embeddings} 

\label{sec:app-exp-emb-parameterization-qa}

In this section, we elaborate on the embedding parameterization methods for MoL that we discussed in Section~\ref{sec:mol-adaptation}.

\subsubsection{Recommendation Systems} \nlprevision{Prior work have shown that careful parameterization of low-rank (``component-level'') embeddings, or $f_p(q)$ and $g_p(x)$s for $1 \leq p \leq P$, can significantly improve MoL's performance~\cite{borisyuk2024linr_cikm24}. In the context of large-scale recommendation systems, cluster information based on interests of cohorts of members and topics of posts by themselves can lead to 10\% recall gain at $K=400$~\cite{borisyuk2024linr_cikm24}. However, we cannot easily access similar information in the publicly available MovieLens~\cite{movielens_2015} and Amazon Reviews~\cite{amznreviews_sigir15} datasets. We therefore follow implementation provided by~\citep{zhai23kdd} and additionally optionally utilizes a User ID keyed one-hot embedding as one query-side low-rank (``component-level'') embeddings $f_p(q)$, which is a widely used technique in recommendation systems~\citep{matrix-factorization-netflix09} that we discussed in Section~\ref{sec:mol-adaptation}. All other component-level embeddings, $f_p(q)$s and $g_p(x)$s, are obtained by applying a multi-layer
perceptron (MLP) on top of query-side/item-side representations in standard sequential recommendation setups~\citep{gru4rec_iclr16,sasrec_icdm18}. The overall setup is illustrated on the right hand side of Figure~\ref{fig:mol-embedding-construction}.}

\begin{figure*}[t]
    \vspace{-1em}
    \centering
    \includegraphics[width=0.9\linewidth]{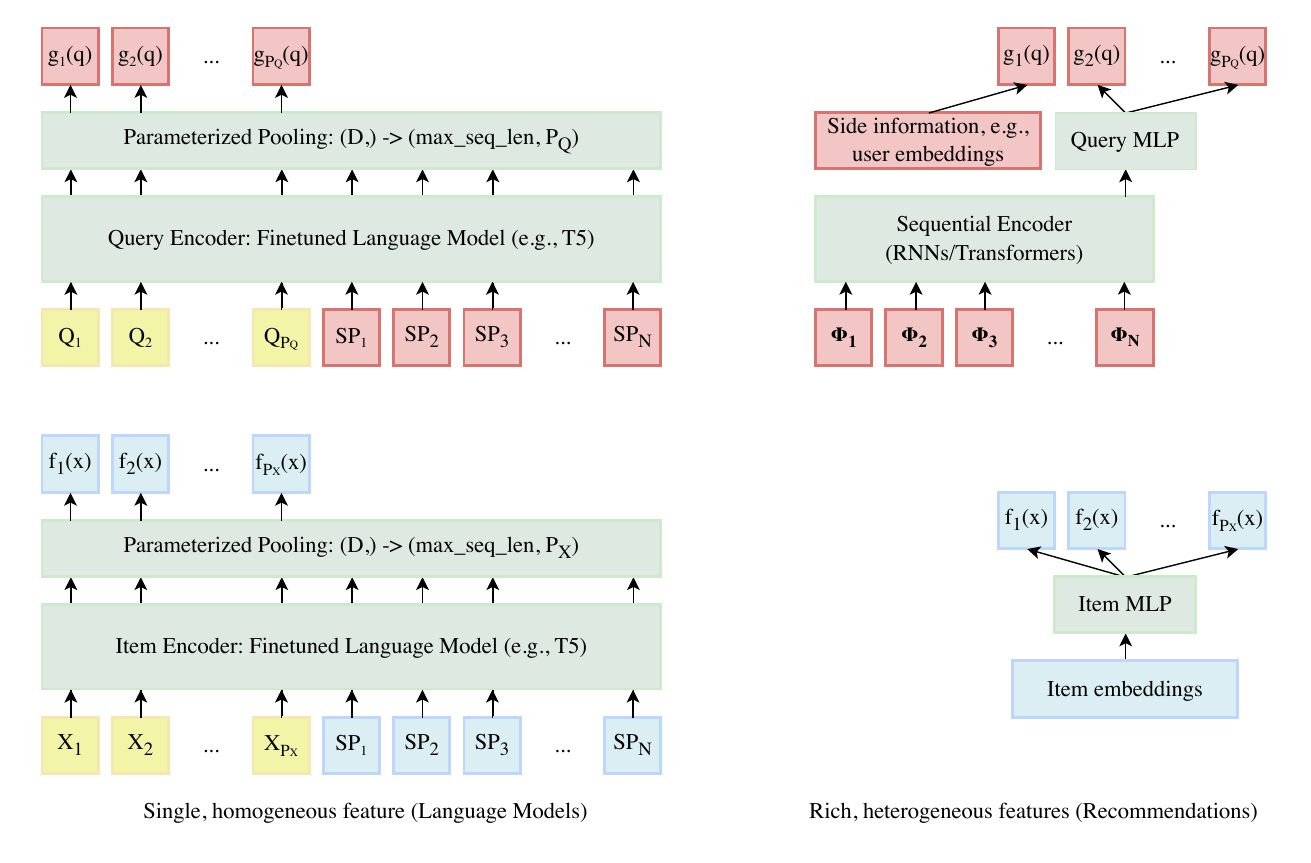}
    \vspace{-1.5em}
    \caption{Illustration of how to parameterize the embeddings to adapt Mixture-of-logits (MoL) learned similarity to various retrieval scenarios, with a language model (LM) finetuning use case in question answering (characterized by a single homogeneous feature) shown on the left, and a recommendation systems use case (characterized by a large number of heterogeneous features) shown on the right.
    For the Question Answering example on the left, $SP_1, \ldots, SP_N$ represents the original SentencePiece~\cite{sentencepiece_emnlp18} tokens that are inputs to the pre-trained language model LM, e.g., T5~\cite{t5_raffel2023exploringlimitstransferlearning}. $Q_1, Q_2, \ldots, Q_{P_Q}$ and $X_1, X_2, \ldots, X_{P_X}$ represent the special aggregation tokens we add to the LM tokenizer for pooling information across the sequence. The ``Parameterized Pooling'' component uses a $D$-dimensional embedding as input to \emph{parameterize, at an example-level}, how to weight each of the (max\_seq\_len) encoder outputs for the $P_Q$/$P_X$ MoL component-level embeddings.}
    \label{fig:mol-embedding-construction}
    \vspace{-1em}
\end{figure*}

\subsubsection{Question Answering (QA)} 

\nlprevision{Unlike Recommendation Systems, retrieval models used in question answering generally take the full semantic representation(s) of the query and/or the document as input, and are finetuned on top of pre-trained language models with homogeneous inputs, or wordpiece / sentencepiece tokens. Our MoL embedding construction consists of two components, special aggregation tokens and parameterized pooling. We present embedding construction on the query side first.}

\paragraph{Special Aggregation Tokens} \nlprevision{Given both queries and documents are represented as token sequences (e.g., SentencePieces~\cite{sentencepiece_emnlp18} in T5~\cite{t5_raffel2023exploringlimitstransferlearning}), we propose to add special tokens that can be used to aggregate different aspects of information as part of the overall self-attention based language model. Specifically, on the query side, let the tokenized sequence be $SP_1, SP_2, \ldots, SP_N$. During finetuning of the pretrained language model, we create $P_Q$ special tokens, $Q_1, \ldots, Q_{P_Q}$, and add them to the vocabulary of the query tokenizer. We also append those exact same $P_Q$ tokens before $SP_1, SP_2, \ldots, SP_N$, so that the $P_Q$ special tokens can be used to aggregate information across the query input using early-fusion mechanisms. Note that many question answering scenarios~\cite{dpr_emnlp20,genre_iclr21,nci_neurips22,gtr_emnlp22,genret_neurips23} utilize bidirectional language models for retrieval, like BERT~\cite{bert_naacl19} or T5~\cite{t5_raffel2023exploringlimitstransferlearning}; for recent unidirectional language models, we can add the special aggregation tokens $X_1, \ldots, X_{P_X}$ and $Q_1, \ldots, Q_{P_Q}$ to the end of the input sequence instead. Our construction can also be viewed as a way to extend the CLS token in BERT~\cite{bert_naacl19,chang-etal-2023-multi_acl23} to cover multiple aspects of information, in a way that encourages diversity via the $\mathcal{L}_{MI}$ load balancing loss discussed in Section~\ref{sec:mol}.}

\paragraph{Parameterized Pooling} \nlprevision{We next add a pooling layer after the language model to encourage learning of aggregation mechanisms separate from language semantics. For each position $1 \leq p \leq P_Q$, this pooling layer defines a probability distribution over different positions in language model's outputs, or $(0, \ldots, max\_seq\_len-1)$. We further \emph{parameterize} the pooling layer, using the $D$-dimensional embedding at the first position after encoders. This enables us to define a pooling policy, at an example-level, how to weight each of the $max\_seq\_len$ LM encoder outputs to arrive at the $P_Q$ MoL embeddings.}

\vspace{.3em}
\nlprevision{The embedding construction on the item-side is identical. We illustrate the overall finetuning setup we use for question answering on the left hand side of Figure~\ref{fig:mol-embedding-construction}.}

\subsection{Parameterization of $\pi_{p}(q, x)$ matrices}

We follow the implementation provided in the original MoL paper~\citep{zhai23kdd}\cut{~\footnote{\url{https://github.com/facebookresearch/generative-recommenders}}}, which parameterizes $\pi_{p}(q, x)$ as a two-layer multi-layer perceptron (MLP) with SiLU~\citep{elfwing2017silu} non-linearity. For recommendation datasets (\mlonem, \mltwentym, \amzn), the inputs to this MLP consist of user-side features, item-side features, and the $P$ dot products $\langle f_p(q), g_p(x)\rangle$s between the low-rank embeddings. For question answering datasets (NQ320K), we only use the last part -- the $P$ dot products $\langle f_p(q), g_p(x)\rangle$s between the low-rank embeddings -- as inputs to this MLP.

\subsection{Hyperparameter settings} 

\subsubsection{Recommendation Systems}
\nlprevision{We use an identical number of sampled negatives for dot product baselines (cosine similarity, ``SASRec'', ``HSTU'' rows in Table~\ref{tbl:model-quality-reco}) and Mixture-of-Logits (``SASRec + MoL'', ``HSTU + MoL'' rows in Table~\ref{tbl:model-quality-reco}) to ensure a fair comparison, which is $128$ for ML-1M and ML-20M and $512$ for Amazon Books following prior work. For ``+ MoL'' rows, we additionally grid searched $|P_x|$ in $\{2, 4, 8, 16\}$, $d_P$ in $\{32, 64, 128\}$, whether to enable user-id based learned embeddings, and the dropout rate to apply to user-id based embeddings in $\{0.2, 0.5, 0.8\}$ for the smaller MovieLens datasets. We followed initial hyperparameters provided by the authors~\citep{zhai23kdd} for all other parameters. The models are trained using PyTorch over 1 NVIDIA RTX 6000 Ada GPU for the smaller \mlonem and \mltwentym datasets and 2 NVIDIA RTX 6000 Ada GPUs for the larger \amzn datasets.}

\subsubsection{Question Answering (QA)} 

\nlprevision{We train the model with AdamW optimizer~\cite{loshchilov2018decoupled_iclr19}, and grid searched learning rate in \{2e-4, 5e-4, 8e-4\} due to the introduction of the parameterized pooling component (Appendix~\ref{sec:app-exp-emb-parameterization-qa}). We apply linear scheduling with warm-up over a fixed 10\% of the training epochs. We train the model on 4 NVIDIA H100 80GB GPUs with a local batch size of 512. Note that due to the computational requirements of this dataset, prior work are frequently trained on 8 GPUs~\cite{dpr_emnlp20,nci_neurips22} or more, e.g., 32 GPUs in GENRE~\cite{genre_iclr21} and 256 TPUs in DSI~\cite{dsi_neurips22}. We perform in-batch negative sampling, consistent with baselines~\cite{dpr_emnlp20,sentencet5_acl2022}. For MoL hyperparameters, we grid searched $P_Q$ and $P_X$ in \{(2, 2), (4, 4), (8, 8), (16, 16)\}, kept $d_P$ identical to the embedding dimension of the pretrained language model ($768$), and selected the best hyperparameters utilizing a validation set.}

\changed{
\section{Examples for exact and approximate top $K$ retrieval algorithms}
\label{appendix:example}

We provide examples for our exact- and approximate- retrieval algorithms discussed in Section~\ref{sec:algorithm} to facilitate understanding.

\paragraph{Retrieval with Exact Algorithm (Section~\ref{sec:algorithm:exact}).}

Table~\ref{table:example} shows an example with $4$ items and $2$ pairs (groups) of embeddings ($P=2$). Assume the goal is to retrieve the top $K=2$ items. In the first stage, we retrieve the top $2$ items for each embedding set, i.e., item $a, b, c$ are retrieved. For each of the retrieved items, we calculate the MoL scores based on their gating weights, i.e., $1.0, 0.4, 0.4$ for $a, b$, and $c$, respectively. Here, $S_{min}$ is set to $0.4$. In the second stage, we retrieve all items with $\langle f, g\rangle \geq 0.4$ for each embedding set, i.e., item $d$, and then calculate their corresponding MoL score, i.e., $0.7$. The algorithm returns $a$ and $d$ as the top $2$ items.

\paragraph{Retrieval with Approximate Algorithm (Theorem~\ref{theorem:topkgap}).}
    Consider again the example shown in Table~\ref{table:example}. Assume we want to retrieve the top $2$ items. With top $K$ over component-level embeddings, item $a, b, c$ are retrieved, and $S_k=0.4$. Since $S$ is calculated as $\max\{0.7, 0.2\}$, the upper bound of the gap is $0.3$. Here, the gap is exact, i.e., the actual second largest MoL score with item $d$ is $0.3$ higher than the second largest MoL score from the set of retrieved items $\{a, b, c\}$. If we bound the gap with the retrieved items only, i.e., $a, b, c$, then we will get a looser bound of $0.8-0.4=0.4$.
}

\cut{
\section{Ablation study}

\subsection{Impact of multipliers}
\todo{Show the change of recall with additional multipliers on \amzn}

\todo{Table~\ref{fig:exp:multiplier}}

\begin{table*}[t]
\centering
\caption{\todo{Impact of multipliers on top $K$ of \amzn}}
\label{table:exp:multiplier}
\begin{tabular}{|l|llll|llll|llll|llll|}
\hline
k & \multicolumn{8}{c|}{50} & \multicolumn{8}{c|}{100} \\
\hline
metric & \multicolumn{4}{c|}{recall} & \multicolumn{4}{c|}{\% of scored items} & \multicolumn{4}{c|}{recall} & \multicolumn{4}{c|}{\% of scored items} \\
\hline
multiplier & 1 & 5 & 10 & 20 & 1 & 5 & 10 & 20 & 1 & 5 & 10 & 20 & 1 & 5 & 10 & 20    \\
\hline
\topkunion          &  .00 & .00  & .00 & .00 & 0.00   &  0.00 &  0.00 &  0.00&  .00 & .00  & .00 & .00 & 0.00   &  0.00 &  0.00 &  0.00\\
\hline
\topkmax          &  .00 & .00  & .00 & .00 & 0.00   &  0.00 &  0.00 &  0.00&  .00 & .00  & .00 & .00 & 0.00   &  0.00 &  0.00 &  0.00\\
\hline
\topkavg          &  .00 & .00  & .00 & .00 & 0.00   &  0.00 &  0.00 &  0.00&  .00 & .00  & .00 & .00 & 0.00   &  0.00 &  0.00 &  0.00\\
\hline
\topkavgunion          &  .00 & .00  & .00 & .00 & 0.00   &  0.00 &  0.00 &  0.00&  .00 & .00  & .00 & .00 & 0.00   &  0.00 &  0.00 &  0.00\\
\hline
\topkavgmax          &  .00 & .00  & .00 & .00 & 0.00   &  0.00 &  0.00 &  0.00&  .00 & .00  & .00 & .00 & 0.00   &  0.00 &  0.00 &  0.00\\
\hline
\end{tabular}
\end{table*}

\subsection{Impact of combination ratio}

\todo{Show the result with different ratio of combined top k for one selected $k$}

\todo{Table~\ref{fig:exp:comb_ratio}}

\begin{table*}[t]
\centering
\caption{\todo{Impact of combination ratio on top $K$ of \amzn}}
\label{table:exp:comb_ratio}
\begin{tabular}{|l|lll|lll|lll|lll|}
\hline
k & \multicolumn{6}{c|}{50} & \multicolumn{6}{c|}{100} \\
\hline
metric & \multicolumn{3}{c|}{recall} & \multicolumn{3}{c|}{\% of scored items} & \multicolumn{3}{c|}{recall} & \multicolumn{3}{c|}{\% of scored items} \\
\hline
multiplier & 1 & 5 & 10 & 1 & 5 & 10 & 1 & 5 & 10 & 1 & 5 & 10    \\
\hline
\topkunion          &  .00 & .00  & .00  &  0.00 &  0.00 &  0.00&  .00  & .00 & .00  &  0.00 &  0.00 &  0.00\\
\hline
\topkmax          &  .00 & .00  & .00  &  0.00 &  0.00 &  0.00&  .00  & .00 & .00  &  0.00 &  0.00 &  0.00\\
\hline
\topkavg          &  .00 & .00  & .00  &  0.00 &  0.00 &  0.00&  .00  & .00 & .00  &  0.00 &  0.00 &  0.00\\
\hline
\topkavgunion          &  .00 & .00  & .00  &  0.00 &  0.00 &  0.00&  .00  & .00 & .00  &  0.00 &  0.00 &  0.00\\
\hline
\topkavgmax          &  .00 & .00  & .00  &  0.00 &  0.00 &  0.00&  .00  & .00 & .00  &  0.00 &  0.00 &  0.00\\
\hline
\topkavguniontwoeight          &  .00 & .00  & .00  &  0.00 &  0.00 &  0.00&  .00  & .00 & .00  &  0.00 &  0.00 &  0.00\\
\hline
\topkavgmaxtwoeight          &  .00 & .00  & .00  &  0.00 &  0.00 &  0.00&  .00  & .00 & .00  &  0.00 &  0.00 &  0.00\\
\hline
\topkavgunioneighttwo          &  .00 & .00  & .00  &  0.00 &  0.00 &  0.00&  .00  & .00 & .00  &  0.00 &  0.00 &  0.00\\
\hline
\topkavgmaxeighttwo          &  .00 & .00  & .00  &  0.00 &  0.00 &  0.00&  .00  & .00 & .00  &  0.00 &  0.00 &  0.00\\
\hline
\end{tabular}
\end{table*}
}

\cut{
\section{Intro Outline}

\point{
Introduction of vector databases
\begin{itemize}
    \item Introduction of vector database
    \item APIs of vector databases
\end{itemize}
}

\point{
Introduction of learned similarity
\begin{itemize}
    \item State-of-the-art recommendation systems now move to learned similarity function
    \item Examples of learned similarity: Tree work, Path work, beam search, Mixture of logits
    \item Figure of the learned similarity: Figure~\ref{fig:learned_similarity}
\end{itemize}
}

\point{
How learned similarities are computed
\begin{itemize}
    \item Add a very brief introduction of the related work for state-of-the-art recommendation models.
    \item Add a description of the related work for learned similarity, i.e., tree, path, MoL.
    \item Three figures to show the high-level learned similarity for the related work, i.e., tree, path, MoL. The three figures will be put in one row.
    \item The math formulation of the four learned simliarities.
    \item How the four learned similarities are computed in practice?
\end{itemize}
}

\point{
Challenge to support learned similiarity in vector databases
\begin{itemize}
    \item Property of learned similarity function: Expensive to compute, not a distance measure (to verify), cannot be precomputed offline.
    \item Learned similarity varies in their forms / formulation.
    \item How learned similarity function is used: Compute on-demand
    \item Distances supported by vector databases: dot product based
\end{itemize}
}

\point{
How we tackle this problem?
\begin{itemize}
    \item Describe the rationale of using rank as the measurement of expressiveness
    \item We step back to the motivation of having the various learned similarity methods: increase the expressiveness of the similarity
    \item Explain the metric for expressiveness, i.e., rank
    \item Explain why dot product similiarty is not expressive?
    \item Our key insight: Among learned similarties, we show that MoL is a universal approximator, i.e., it can approximate full rank. Thus, it is the most general form of learned similarity.
\end{itemize}
}

\point{
This work
\begin{itemize}
    \item Goal: Support efficient retrieval of items with learned similarity functions
    \item Key insight: MoL is expressive enough to represent any learned similarity function
    \item Key insight: For MoL, we can leverage dot product similarity function to retrieve items with learned similarity function \textcolor{blue}{which allows us to partially reuse existing vector db apis?}
\end{itemize}
}

\section{Hardware Optimization Considerations}

Specific problem setup for the efficient implementation
\begin{itemize}
    \item The item embeddings are cached in memory (HBMs on GPUs or DRAMs on CPUs). The cost of the top $k$ algorithm is generally memory-bandwidth bound.
    \item Access the embeddings of non-consecutive items is very expensive unless the number of items accessed is orders of magnitude less than the total number of items. 
    \item The computation of dot product and top $k$ for dot product is much faster than scanning the embeddings in the memory. \textcolor{blue}{for the time needed to read one byte, we can perform 300-2000x multiplications}
    \item The computation of the learned similarity is relatively more expensive. 
    \item The computation of the learned similarity needs to be batched, e.g., the number of (user, item) pairs per batch can be \textcolor{blue}{$8\cdot 10^4 \sim \cdot 10^5$}.
\end{itemize}

}

\end{document}